\definecolor{DarkGreen}{rgb}{0.1,0.5,0.1}
\definecolor{DarkRed}{rgb}{0.5,0.1,0.1}
\definecolor{DarkBlue}{rgb}{0.1,0.1,0.5}
\newcommand{\st}[1]{{\mathrm{\textbf{set}}}\left(#1\right)}
\newcommand{\cC}{\ensuremath{\mathcal{C}}}
\newcommand{\F}{{F}}
\newcommand{\ip}[2]{\ensuremath{\left\langle #1,#2\right\rangle}}
\newcommand{\inset}[1]{\left\{#1\right\}}
\newcommand{\inparen}[1]{\left(#1\right)}
\newcommand{\suchthat}{\,:\,}
\newcommand{\spn}{\ensuremath{\operatorname{span}}}
\newcommand{\tr}{\mathrm{tr}}
\newcommand{\Img}{\ensuremath{\operatorname{Im}}}
\newcommand{\eps}{\varepsilon}
\renewcommand{\epsilon}{\varepsilon}
\newcommand{\vphi}{\varphi}
\newtheorem{theorem}{Theorem} 
\newtheorem{lemma}{Lemma} 
\newtheorem{definition}{Definition}
\newtheorem{observation}{Observation}
\newtheorem{remark}{Remark}
\newtheorem{claim}{Claim}
\newtheorem{proposition}{Proposition}
\title{Repairing multiple failures for scalar MDS codes}
\author{
Jay Mardia\thanks{Department of Electrical Engineering, Stanford University. \texttt{jmardia@stanford.edu}}
\ and
Burak Bartan\thanks{Department of Electrical Engineering, Stanford University.  \texttt{bbartan@stanford.edu}} 
\ and Mary Wootters\thanks{Departments of Computer Science and Electrical Engineering, Stanford University.  \texttt{marykw@stanford.edu}}}
\begin{document}
\maketitle

\begin{abstract}
In distributed storage, erasure codes---like Reed-Solomon Codes---are often employed to provide reliability.
In this setting, it is desirable to be able to repair one or more failed nodes while minimizing the \emph{repair bandwidth}.
In this work, motivated by Reed-Solomon codes, we study the problem of repairing multiple failed nodes in a scalar MDS code.
We extend the framework of (Guruswami and Wootters, 2017) to give a framework for constructing repair schemes for multiple failures in general scalar MDS codes, in the centralized repair model.  We then specialize our framework to Reed-Solomon codes, and extend and improve upon recent results of (Dau et al., 2017).  
\end{abstract}

\section{Introduction}\label{sec:intro}
In coding for distributed storage, one wishes to store some data $x \in \Sigma^k$ across $n$ nodes.  These nodes will occasionally fail, and \em erasure coding \em is used to allow for the recovery of $x$ given only a subset of the $n$ nodes.  A common solution is to use a \em Maximum-Distance Separable \em (MDS) code; for example, a Reed-Solomon code.   An MDS code encodes a message $x \in \Sigma^k$ into $n$ symbols $c \in \Sigma^n$, in such a way that any $k$ symbols of $c$ determine $x$.  By putting the symbols $c_i$ of $c$ on different nodes, this gives a distributed storage scheme which can tolerate $n-k$ node failures.

While this level of worst-case robustness is desirable, in practice it is much more common for only a few nodes to fail, rather than $n-k$ of them. 
To that end, it is desirable to design codes which are simultaneously MDS and which also admit cheap repair of a few failures.  One important notion of ``cheap" is \em network bandwidth: \em the amount of data downloaded from the surviving nodes.  The naive MDS repair scheme would involve downloading $k$ complete symbols of $n$.  
Minimum storage regenerating (MSR) codes~\cite{DGWW10} improve the situation; these are codes which maintain the MDS property, while substantially reducing repair bandwidth for a single failure. 

Most of the work in regenerating codes has focused on this case of a single failure, as is many systems this is the most common case, as documented by Rashmi et al. in~\cite{facebook}.  However, even in~\cite{facebook} it is not uncommon to have multiple failures at once, and some systems employ \em lazy repair \em to encourage this~\cite{totalrecall}.  
Motivated by this, many recent works have considered this case of \em multiple failures. \em 
In this work, we focus on the question of multiple failures for scalar MDS codes.
Our work is inspired by Reed-Solomon codes---arguably the most commonly-used code for distributed storage---but our framework works more broadly for any scalar MDS code.

\subsection{Previous work and our contributions}
There has been a huge amount of work on regenerating codes, and we refer the reader to the survey~\cite{survey} for an excellent introduction.  Most of the work has focused on a single failures, but recently there has been a great deal of work on multiple failures.  Two commonly studied models are the \em centralized model \em (which we study here), and the \em cooperative model. \em  In the centralized model, a single repair center is responsible for the repair of all failed nodes, while in the cooperative model the replacement nodes may cooperate but are distinct~\cite{SH13, KLS11, LL14}. 

We focus on the centralized model.  Most of this work in this model has focused on achieving the \em cut-set bound \em for multiple failures~\cite{CJM13,LLL15,RKV16,ZW16,ZW17b,ZW17,WTB17,YB17,YB18}. This extends with well-known cut-set bound for the single-failure case~\cite{DGWW10}, and is only achievable when the \em sub-packetization \em (that is, the number of sub-symbols that each node stores) is reasonably large; in particular, we (at least) require the subpacketization $t$ to be larger than $n-k$, otherwise the trivial lower bound of $k + t - 1$ is larger than the cut-set bound.  Most of the works mentioned above focus on \em array codes, \em that is, codes where the alphabet $\Sigma = B^t$ is naturally thought of as a vector space over a finite field $B$, and the codes are generally $B$-linear.

Other recent works~\cite{multfail,multfail2,YB17} focused on Reed-Solomon codes, and studied multiple failures for \em scalar codes, \em where the alphabet $\Sigma = F$ is a finite field, and the codes are required to be linear over $F$.
In \cite{YB17}, the goal is again the cut-set bound, and the underlying subpacketization is necessarily exponentially large.
In \cite{multfail, multfail2}, 
the sub-packetization is taken to be smaller, on the order of $\log(n)$.  This is the natural parameter regime for Reed-Solomon codes, and in this regime the cut-set bound is not achievable for high-rate codes.  Our work falls into this latter category.

Beginning with the work of Shanmugam et al. in \cite{scalarmds}, the repair properties of scalar MDS codes has been increasingly studied~\cite{scalarmds, gw16, YB16, YTB17, DM17, multfail, multfail2, YB17, DD17, DDC18}.  
In \cite{gw16}, the authors gave a framework for studying single-failure repair schemes for scalar MDS codes, and
the works of Dau et al. \cite{multfail, multfail2} mentioned above
adapt the single-failure scheme from~\cite{gw16} to handle two or three failures for Reed-Solomon codes, in several models, including the centralized model.  

In this work, we extend and improve the results of \cite{multfail,multfail2}.
More precisely, we make the following contributions.
\begin{enumerate}
	\item Following the setup of~\cite{gw16}, we give a general framework for constructing repair schemes of scalar MDS codes for multiple failures.  Theorem~\ref{thm:main} shows that collections of dual codewords with certain properties naturally give rise to repair schemes for multiple failures.  This framework is applicable to any scalar MDS code, and for any number of failures $r \leq n-k$.
	\item We instantiate Theorem~\ref{thm:main} with two different schemes in Theorems~\ref{thm:mainrs_large} and \ref{thm:mainrs_small} that both give non-trivial repair schemes for multiple failures. While the scheme in Theorem~\ref{thm:mainrs_large} is asympototically better (as $n$, the length of the code, tends to infinity), the scheme in Theorem~\ref{thm:mainrs_small} is better for small $n$.

Our schemes are the first in this parameter regime to work for $r > 3$, and additionally they improve over previous work~\cite{multfail, multfail2} when $r = 2,3$.  More precisely, we obtain the following bounds:

\begin{itemize}
	\item Theorem~\ref{thm:mainrs_small} improves and generalizes the scheme of Dau et al. for Reed-Solomon codes in the centralized model \cite{multfail}.  More precisely, in Theorem~\ref{thm:mainrs_small}, for any $r \leq r_0$, for some $r_0 = O(\sqrt{\log(n)})$, we give schemes for high-rate (say, $1- \eps$), length $n$ Reed-Solomon codes which have repair bandwidth (measured in bits)
\[ \inparen{ (n - r) \cdot r - \frac{r(r-1)}{2}\inparen{ \frac{1}{\eps} - 1 } } \cdot \log_2(1/\eps). \]

	For comparison, the scheme of Dau et al. worked for $r=2,3$, and had bandwidth $(n-r) \cdot r \cdot \log_2(1/\eps)$. Thus, for $r = 2,3$ Theorem~\ref{thm:mainrs_small} improves the bandwidth by $\frac{r(r-1)}{2}(1/\eps - 1)\log_2(1/\eps)$ bits, and for larger $r$ we give the first non-trivial schemes for Reed-Solomon Codes with small subpacketization.

	When $r = 1$, this collapses to the scheme of~\cite{gw16}, which is optimal.
	
	\item Theorem~\ref{thm:mainrs_large} 
improves over Theorem~\ref{thm:mainrs_small} asymptotically, but it is not as good for small values of $n$.
For this theorem we again generalize the constructions of \cite{gw16,DM17}, but we do so in a different way and do not go the same route as \cite{multfail}.
We obtain 
a repair scheme for high rate ($1-\epsilon$), length $n$ Reed Solomon codes with repair bandwidth (in bits) at most
	\[\min_{r' \geq r} \inparen{ (n - r') \inparen{ \log_{2}(n) - \left\lfloor \log_{2}\inparen{ \frac{n\epsilon + r' -1 }{ 2r' - 1 } }\right\rfloor } }\]
	
	When $r = 1$, this too collapses to the scheme of~\cite{gw16}, which is optimal, and the bound is nontrivial for $r \leq r_0$ where $r_0 = O(n^{1 - \eps})$.  
	
\end{itemize}

We compare our two bounds, along with the trivial bound of bandwidth $kt$ and the optimal bound for $r=1$, in Figure~\ref{fig:quantitative}.
\begin{figure}[h!]
\begin{center}
\includegraphics[width=7cm]{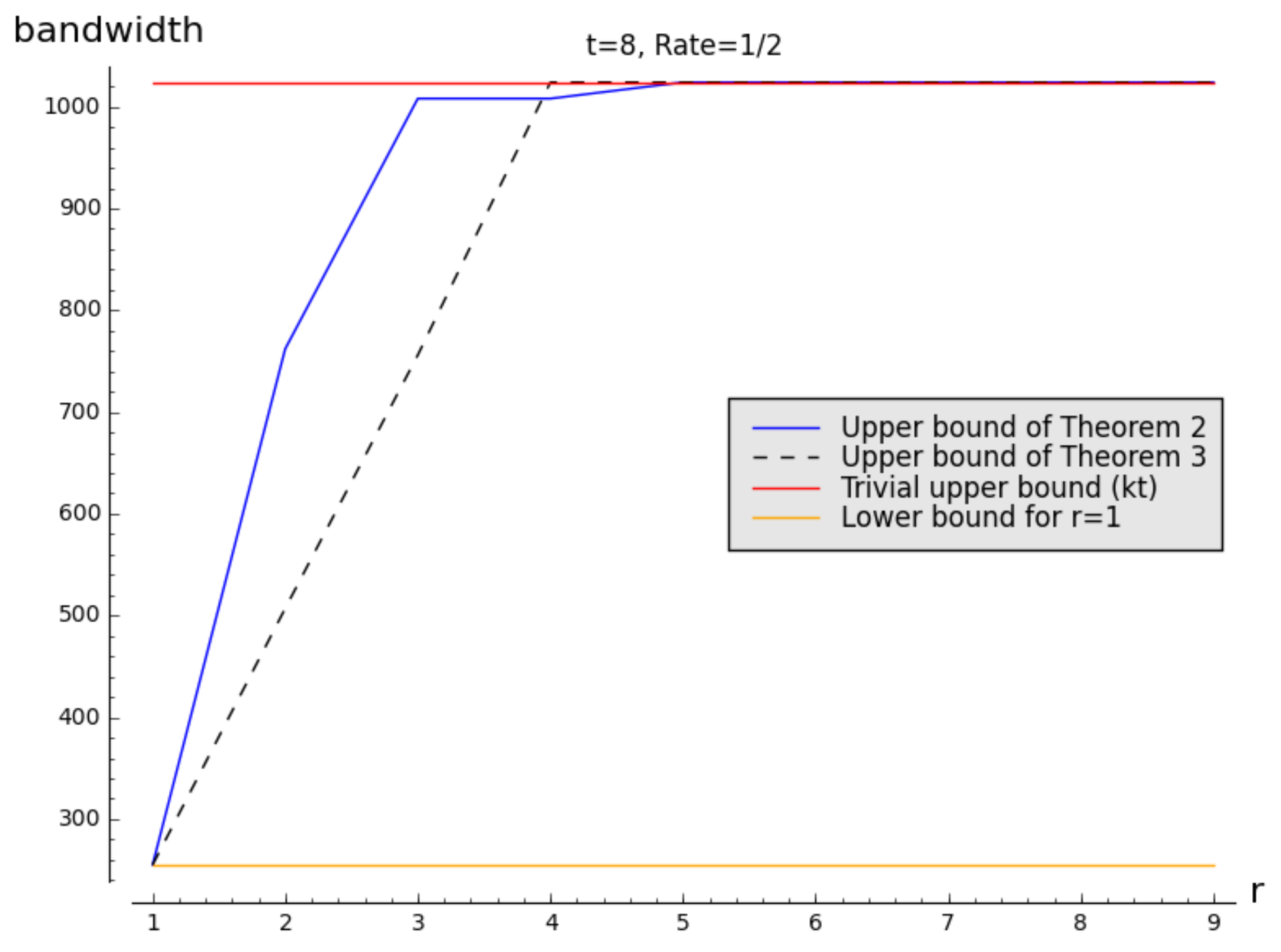}
\hspace{1cm} 
\includegraphics[width=7cm]{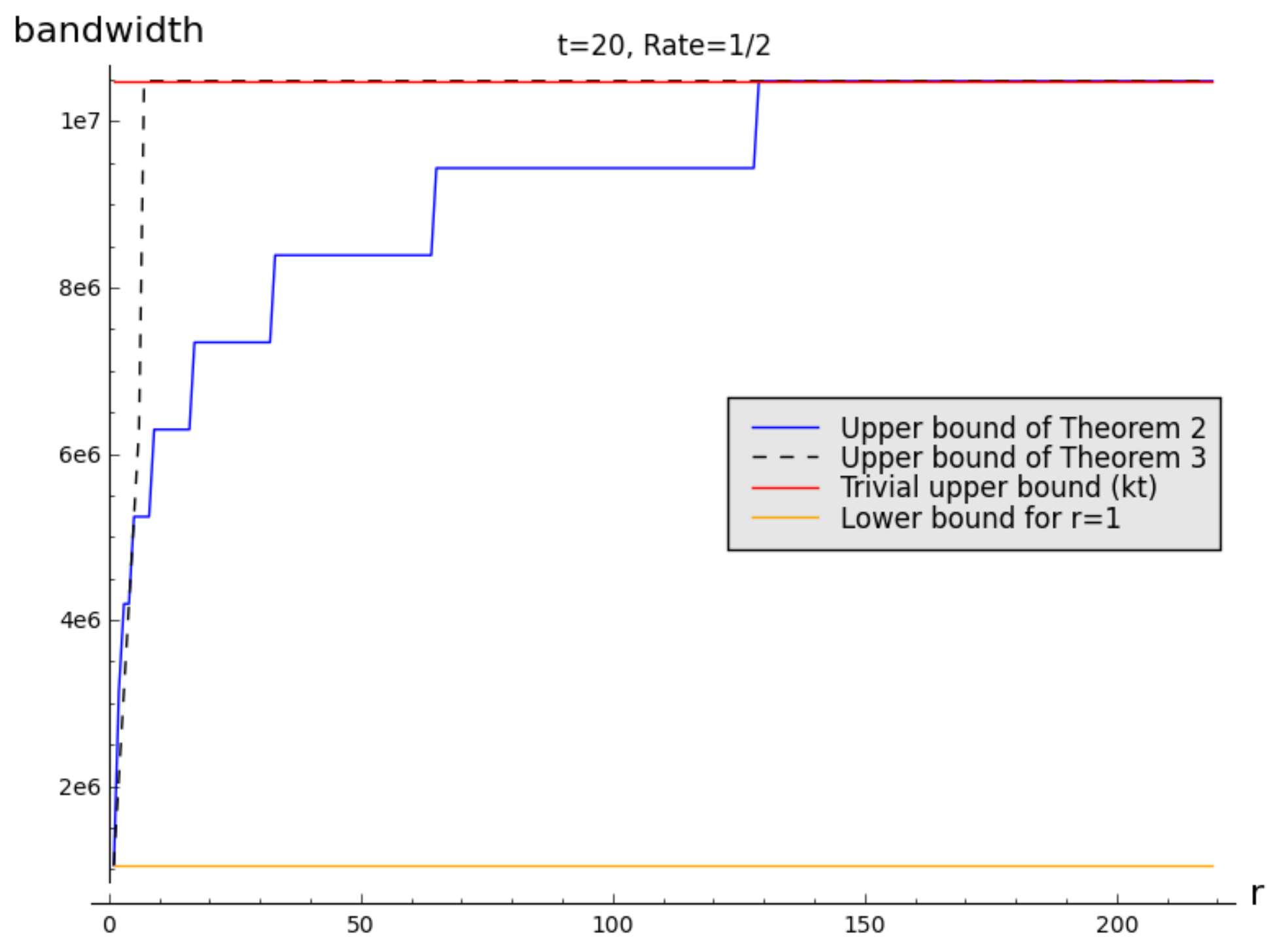} 
\end{center}
\caption{The bound of Theorems~\ref{thm:mainrs_large} and \ref{thm:mainrs_small}, for rate $1/2$ Reed-Solomon codes with $n = 2^t$, for $t=8$ and $t=20$.  When $t$ is small, Theorem~\ref{thm:mainrs_small} is better; when $t$ is large, Theorem~\ref{thm:mainrs_large} is better.  Other works~\cite{multfail,multfail2} on multiple failures in this parameter regime corrected up to two or three failures only. }
\label{fig:quantitative}
\end{figure}

\end{enumerate}
We emphasize that the codes in Theorems~\ref{thm:mainrs_large} and \ref{thm:mainrs_small} are simply Reed-Solomon codes that use all their evaluation points; that is, our results imply that this one classical code can be repaired from a growing number of failures with non-trivial bandwidth, and the repair behavior degrades gracefully as the number of failures increases to $n-k$.
However, we do not have a matching lower bound for larger $r$, and we suspect that further improvements are possible.

\paragraph{Organization.}
In Section \ref{sec:notation} we set up notation and give formal definitions for the problems we consider.
In Section~\ref{sec:framework}, we give Theorem~\ref{thm:main}, which provides a framework for constructing repair schemes for multiple failures for general scalar MDS codes.  In Section~\ref{sec:rs_r}, we give Theorems~\ref{thm:mainrs_large}~and~\ref{thm:mainrs_small}, which specialize Theorem~\ref{thm:main} to Reed-Solomon codes, and gives the results advertised above.

\section{Preliminaries}\label{sec:notation}
In this section, we set up notation, and formally introduce the definitions that we will work with throughout the paper.

\subsection{Notation}\label{ssec:notation}
We use the notation $[n]$ to mean the set of integers $\{1,\ldots,n\}$, and for vectors $v,w \in \F^n$, we use $\ip{v}{w} = \sum_{i \in [n]} v_i w_i$ to denote the standard inner product.

\paragraph{Matrix and vector notation.}
Unless otherwise noted, vectors $v$ are treated as column vectors; the $i$'th entry of a vector $v$ is denoted $v_i$.  For a vector $v \in \F^n$ and a set $I \subseteq [n]$, with $I = \{i_1,\ldots, i_r\}$ and $i_1 < i_2 < \cdots < i_r$, $v_I$ denotes the (column) vector $(v_{i_1}, v_{i_2}, \ldots, v_{i_r})$.
  For a vector $v \in \F^m$, we will use $\st{v}$ to denote the set
$ \st{v} = \inset{ v_i \suchthat i \in [m] }. $

For a matrix $M$, we use $M[:,i]$ to refer to the $i$'th column and $M[i,:]$ to refer to the $i$'th row of $M$.  For sets $I, J$, we will use $M[I,J]$ to refer to the submatrix of $M$ containing the rows indexed by $I$ and the columns indexed by $J$; and we will extend this to $M[I,:]$ and $M[:,J]$ to mean the submatrix formed by the rows in $I$ or columns in $J$, respectively.  Our notation is $1$-indexed.

\paragraph{Finite field notation.}
Throughout this paper, $\F$ denotes a finite field, and $B \subset \F$ denotes a subfield of $\F$.  
We use $\F^*$ and $B^*$ to denote the group of units in $\F$ and $B$ respectively, and $\F^*/B^*$ to denote the quotient group.
For a set of elements $S \subseteq \F$, we will use $\spn_B(S)$ to denote the linear span over $B$ of $S$:
\[ \spn_B(S) = \inset{ \sum_{x \in S} a_x \cdot x \suchthat a_x \in B }. \]
We will similarly use $\dim_B$ to refer to the dimension over $B$.
Finally, for a field $\F$ with a subfield $B$, so that $\F$ has degree $t$ over $B$, the field trace $\tr_{\F/B}: \F \to B$ is defined by
\[ \tr_{\F/B}(x) := \sum_{i=0}^{t-1} x^{|B|^i}. \]
The function $\tr_{\F/B}$ is a $B$-linear function from $\F$ to $B$.
We refer the reader to, for example, \cite{fraleigh} for a primer/refresher on finite fields.

\subsection{Definitions}
Let $\cC \subset \Sigma^n$ be a code of block length $n$ over an alphabet $\Sigma$.
As described in the introduction, we imagine the the $n$ symbols of a codeword $c = (c_1,c_2,\ldots,c_n) \in \cC$ are distributed between $n$ different nodes, so that node $i$ stores the symbol $c_i$. 

\paragraph{The exact repair problem.}
In the \em exact repair problem, \em one node, Node $i$, is unavailable, and the goal is to \em repair \em it (that is, recover $c_{i}$) using only information from the remaining nodes.  Of course, any MDS code can achieve this: by definition, all of $c$ is determined by any $k$ symbols, and so any $k$ surviving nodes determine all of $c$ and in particular the missing information $c_{i}$.  
But, as described in the introduction, we hope to do better than this, in terms of the amount of data downloaded.

Formally, suppose that $\Sigma \simeq B^t$ can is a vector space over some \em base field \em $B$.  
Thus, the contents of a node (a symbol $c_i \in \Sigma$) are $t$ sub-symbols from $B$.  When a node fails, a replacement node or repair center can contact a surviving node, which may do some computation and return some number---possibly fewer than $t$---sub-symbols from $B$.
The parameter $t$ is called the \em sub-packetization. \em
Formally, we define an exact repair scheme as follows.

\begin{definition}\label{def:exactrepair}
An \em exact repair scheme \em for a code $\cC \subset \Sigma^n$ is defined as follows.  For each $i \in [n]$, there is a collection of functions
\[ \inset{ g_{i,j} \suchthat j \in [n] \setminus \{i\} }, \]
so that 
\[ g_{i,j}: \Sigma \to B^{b_{i,j}} \]
for some non-negative integer $b_{i,j}$.
and so that for all $c \in \cC$, $c_i$ is determined from $\inset{ g_{i,j}(c_j): j \in [n] \setminus \{i\} }$.
The \em bandwidth \em of this scheme (measured relative to $B$) is the total number of elements of $B$ required to repair any node:
\[ \mathrm{bandwidth} = \max_{i \in [n]} \sum_{j \in [n] \setminus \{i\} } b_{i,j}. \]
\end{definition}

\begin{remark}[Variants]\label{rem:variants}
The definition above is not the only definition of regenerating codes, and is missing several parameters often considered.  For example, we may also limit \em number \em of nodes contacted, requiring the repair scheme to only contact $d$ out of the surviving nodes.  We may also allow for the nodes to store more elements of $B$ than the original data blocks to (in the lingo of regenerating codes, to move away from the MSR setting and toward the MBR setting).  However, the goal of the current work is to study multiple failures in scalar MDS codes.
\end{remark}

\paragraph{Multiple failures.}
In this work, we will focus on the centralized model of multiple repair~\cite{CJM13}. 
In this model, a repair center is in charge of the repair for all the nodes.  We count as bandwidth the information downloaded by this repair center, but not between the center and any of the replacement nodes.  Formally, we have the following definition.

\begin{definition}\label{def:exactrepairmult}
An \em exact centralized repair scheme \em for $r$ failures for a code $\cC \subset \Sigma^n$ is defined as follows.  For each set $I \subseteq [n]$ of size at most $r$, there is a collection of functions
\[ \inset{ g_{I,j} \suchthat j \in [n] \setminus I }, \]
so that 
\[ g_{I,j}: \Sigma \to B^{b_{I,j}} \]
for some non-negative integer $b_{I,j}$,
and so that for all $c \in \cC$, and all $i \in I$, $c_i$ is determined from $\inset{ g_{I,j}(c_j): j \in [n] \setminus I }$.
The \em bandwidth \em of this scheme (measured relative to $B$) is the total number of elements of $B$ required to repair the nodes in any set $I$:
\[ \mathrm{bandwidth} = \max_{I \subset [n], |I| \leq r} \sum_{j \in [n] \setminus \{i\} } b_{I,j}. \]
\end{definition}

Definition~\ref{def:exactrepairmult} is perhaps the simplest possible definition of the exact repair problem for multiple failures.  As per Remark~\ref{rem:variants}, we could spice up the definition of the exact repair problem in many ways; and beyond that following the work of~\cite{multfail} for Reed-Solomon codes in other models, we could include in our measure of bandwidth some way to capture the cost of communication between the multiple replacement nodes.  However, addressing even this simplest case is interesting and much is unknown, so we will focus on this case for the current work, and we hope that the insights of this work may be extended to more complicated models.

\paragraph{Linear repair schemes and scalar MDS codes.}
As mentioned in the introduction, most of the work on regenerating codes explicitly views the alphabet $\Sigma$ as a vector space over some field $B$.  However, for many codes commonly used in distributed storage---notably Reed-Solomon codes---it is more common to view the alphabet $\Sigma$ as a finite field $\F$.  Such codes are termed ``scalar" MDS codes~\cite{scalarmds}.  However, if $B \subseteq \F$ is a subfield so that the degree of $\F$ over $B$ is $t$, then $\F$ is in fact a vector space of dimension $t$ over $B$, and so the set-up above makes sense.
We focus on this setting for the rest of the paper: that is, $\cC \subset \F^n$ is a linear subspace which has the property that any $k$ symbols of a codeword $c \in \cC$ determine $c$.

In this setting, while more restrictive\footnote{The difference is that an array code with the MDS property need not be itself a linear code over $\Sigma$ (and indeed this may not even make sense if $\Sigma$ is not a field), while a scalar MDS code is by definition linear over $\Sigma$.} than that of Definition~\ref{def:exactrepair}, there is additional algebraic structure which, it turns out, very nicely characterizes exact repair schemes for a scalar MDS code $\cC \subset \F^n$ (for a single failure) in terms of the \em dual code \em $\cC^\perp := \inset{ v \in \F^n \suchthat \ip{c}{v} = 0 \forall c \in \cC}$.  More formally, we define a \em repair matrix \em for a symbol $i \in [n]$ as follows.
\begin{definition}\label{def:repmat}
Let $\cC \subseteq \F^n$ be an  MDS code over $\F$, and suppose that $B$ is a subfield of $\F$, so that $\F$ has degree $t$ over $B$.  Let $i \in [n]$.  A \em repair matrix \em with bandwidth $b$ for an index $i$ is a matrix $M \in \F^{n \times t}$ with the following properties:
\begin{enumerate}
	\item The columns of $M$ are codewords in the dual code $\cC^\perp$.
	\item The elements of the $i$'th row $M[i,:]$ of $M$ have full rank over $B$.
	\item We have
\[ \sum_{j \in [n]\setminus\{i\}} \dim_B \inset{ \spn_B\inset{ \st{M[j,:]} } } = b. \]
\end{enumerate}
\end{definition}
One of the main results of \cite{gw16} was that repair matrices precisely characterize linear repair schemes.  We say that a repair scheme as in Definition~\ref{def:exactrepair} is \em linear \em if the functions $g_{i,j}$, along with the function that determines $c_i$, are all $B$-linear.   The work of \cite{gw16} showed that a (scalar) MDS code $\cC$ admits a linear repair scheme with bandwidth $b$ if and only if, for all $i \in [n]$, there is a repair matrix with bandwidth at most $b$ for $i$.

\section{Framework}\label{sec:framework}
In this section, we extend the framework of \cite{gw16} to the case of multiple repairs.  
We first define an analog of repair matrices for multiple repair. 
\begin{definition}\label{def:megarepmat}
Let $\cC \subseteq \F^n$ be a MDS code over $\F$, and suppose that $B$ is a subfield of $\F$, so that $\F$ has degree $t$ over $B$.  Let $I \subset [n]$ have size $r$.  A \em multiple-repair matrix \em with bandwidth $b$ for $I$ is a  matrix $M \in \F^{n \times rt}$ with the following properties:
\begin{enumerate}
        \item\label{prop:dual} The columns of $M$ are codewords in the dual code $\cC^\perp$.
	\item\label{prop:fullrank} The submatrix $M[I,:]$ has full rank over $B$, in the sense that for all nonzero $x \in B^{rt}$, $M[I,:] \cdot x \neq 0$.
        \item\label{prop:lowdim} We have
\[ \sum_{j \in [n]\setminus I} \dim_B \inset{ \spn_B\inset{ \st{M[j,:]} } } = b. \]
\end{enumerate}
\end{definition}

Our main theorem is that an MDS code $\cC$ admits a (linear) repair scheme for a set $I$ of failed nodes with bandwidth $b$ if there exists a multiple-repair matrix with bandwidth $b$ for $I$.
\begin{theorem}\label{thm:main}
Let $\cC \subset \F^n$ be an MDS code, and let $B \subset \F$ be a subfield so that $\F$ has degree $t$ over $B$.  
Suppose that for all $I \subseteq [n]$ of size $r$, there is a multiple-repair matrix $M \in \F^{n \times rt}$ with bandwidth at most $b$ for $I$.
Then $\cC$ admits an exact centralized repair scheme for $r$ failures with bandwidth $b$.
\end{theorem}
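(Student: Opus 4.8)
The plan is to mimic the single-failure argument of \cite{gw16} (the case $r=1$), using the field trace $\tr_{\F/B}$ as a canonical non-degenerate $B$-linear functional to ``project'' the dual-codeword constraints down to $B$. Fix $I \subseteq [n]$ with $|I| = r$, and let $M \in \F^{n\times rt}$ be the promised multiple-repair matrix, with columns $m^{(1)},\dots,m^{(rt)} \in \cC^\perp$ (Property~\ref{prop:dual}). For any codeword $c \in \cC$, orthogonality gives, for each $\ell \in [rt]$, the identity $\sum_{i\in I} c_i m^{(\ell)}_i = -\sum_{j\notin I} c_j m^{(\ell)}_j$ in $\F$; applying the $B$-linear map $\tr_{\F/B}$ yields the $rt$ scalar equations
\[ \sum_{i\in I}\tr_{\F/B}\!\inparen{c_i m^{(\ell)}_i} = -\sum_{j\notin I}\tr_{\F/B}\!\inparen{c_j m^{(\ell)}_j}, \qquad \ell \in [rt]. \]
The repair center will reconstruct the right-hand sides from the downloaded data, and then I will argue the left-hand sides determine $\inset{c_i : i \in I}$.

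For the download: for each surviving $j \notin I$, set $s_j := \dim_B \spn_B\inset{\st{M[j,:]}}$ and fix a $B$-basis $\mu_{j,1},\dots,\mu_{j,s_j}$ of $\spn_B\inset{\st{M[j,:]}}$. Since every entry $m^{(\ell)}_j$ lies in that span, $\tr_{\F/B}(c_j m^{(\ell)}_j)$ is a $B$-linear combination, with coefficients depending only on $M$ and not on $c$, of the $s_j$ values $\tr_{\F/B}(c_j \mu_{j,p})$, $p \in [s_j]$. So define $g_{I,j}(c_j) := \inparen{\tr_{\F/B}(c_j \mu_{j,1}),\dots,\tr_{\F/B}(c_j \mu_{j,s_j})} \in B^{s_j}$, i.e. $b_{I,j} = s_j$. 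From these the repair center forms every right-hand side above, and the total download is $\sum_{j\notin I} s_j = b$ by Property~\ref{prop:lowdim}.

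The crux is showing the left-hand sides recover $(c_i)_{i\in I}$. Consider the $B$-linear map $\Phi : \F^r \to B^{rt}$ sending $(c_i)_{i\in I}$ to $\inparen{\sum_{i\in I}\tr_{\F/B}(c_i m^{(\ell)}_i)}_{\ell\in[rt]}$. Both the domain and codomain have $B$-dimension $rt$, so it suffices to prove $\Phi$ injective. Suppose $\Phi((c_i)) = 0$. For any $x \in B^{rt}$, take the $B$-combination of the coordinate equations with weights $x_\ell$ and push each $x_\ell \in B$ inside the trace, using $B$-linearity of $\tr_{\F/B}$: this gives $\sum_{i\in I}\tr_{\F/B}\!\inparen{c_i\,(M[I,:]x)_i} = 0$, since $(M[I,:]x)_i = \sum_\ell m^{(\ell)}_i x_\ell$. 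Here Property~\ref{prop:fullrank} does the work: it says $x \mapsto M[I,:]x$ is an injective $B$-linear map $B^{rt}\to\F^r$, hence — again by the dimension count $\dim_B B^{rt} = rt = \dim_B \F^r$ — surjective, so $M[I,:]x$ ranges over all of $\F^r$ as $x$ ranges over $B^{rt}$. Therefore $\sum_{i\in I}\tr_{\F/B}(c_i w_i) = 0$ for every $w \in \F^r$; choosing $w$ supported on a single coordinate $i$ and using non-degeneracy of the trace form (for $0 \neq \alpha \in \F$, the functional $z \mapsto \tr_{\F/B}(\alpha z)$ is not identically zero) forces $c_i = 0$ for each $i \in I$. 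Thus $\Phi$ is a $B$-isomorphism, and inverting it recovers $(c_i)_{i\in I}$, in particular every failed symbol $c_i$, from the reconstructed right-hand sides; all the maps involved are $B$-linear, so this is a linear repair scheme.

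I expect no deep obstacle here — the content is in recognizing that Property~\ref{prop:fullrank} upgrades ``$M[I,:]$ injective on $B^{rt}$'' to ``$M[I,:]$ surjective onto $\F^r$'', which is exactly what makes the $rt$ trace equations invertible in the $r$ unknowns $c_i \in \F$. The only other point to dispatch is the mild mismatch between Definition~\ref{def:exactrepairmult} (which asks for sets of size \emph{at most} $r$) and the hypothesis (multiple-repair matrices for sets of size exactly $r$): for $|I| < r$ one simply enlarges $I$ to any size-$r$ superset $I' \supseteq I$, runs the above scheme for $I'$, and keeps the coordinates indexed by $I$, at download cost still at most $b$.
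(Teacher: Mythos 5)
Your proof is correct and follows essentially the same route as the paper's: identical download functions built from a $B$-basis of each row's span, the same $rt$ trace identities from the dual codewords, and the same key step of using Property~\ref{prop:fullrank} plus a $B$-dimension count to upgrade injectivity of $y \mapsto M[I,:]y$ to surjectivity, which together with non-degeneracy of the trace form makes the map $(c_i)_{i\in I} \mapsto \bigl(\sum_{i\in I}\tr_{\F/B}(c_i M[i,\ell])\bigr)_\ell$ invertible (you argue this by contrapositive where the paper phrases it via the adjoint, but it is the same argument). Your closing remark about enlarging $|I|<r$ to a size-$r$ superset is a small point the paper leaves implicit.
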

\begin{proof}
Let $I \subset [n]$ be any set of $r$ failures, and let $M \in \F^{n \times rt}$ be a multiple-repair matrix with bandwidth $b$ for $I$.  
For each $j \in [n] \setminus I$, 
we will show how to use $M$ to construct the functions $g_{I,j}: \F \to B^{b_{I,j}}$. 

We will choose $b_{I,j}$ (the number of sub-symbols returned by $g_{I,j}$) to be $b_{I,j} = \dim_B\inset{ \spn_B \inset{ \st{M[i,:]} } }$. Then by Definition~\ref{def:megarepmat}, $\sum_{j \in [n] \setminus I} b_{I,j} \leq b$.  Let $\lambda_1,\ldots, \lambda_{b_{I,j}} \in \F$ be a basis for the elements of $M[j,:]$ over $B$.
(We note that the $\lambda_i$ depend on the choice of $j$, but we suppress this for notational clarity).
For $x \in \F$, we choose
\[ g_{I,j}(x) = ( \tr_{\F/B}(\lambda_1 \cdot x), \tr_{\F/B}(\lambda_2 \cdot x), \cdots, \tr_{\F/B}(\lambda_{b_{I,j}} \cdot x ) ). \]
We first observe that, by Property~\ref{prop:lowdim} in Definition~\ref{def:megarepmat}, the total bandwidth of this scheme is $b$ symbols of $B$.
We next need to show that this repair scheme works; that is,
we need to show that for all $c \in \cC$, the values $\inset{g_{I,j}(c_j) \suchthat j \in [n] \setminus I}$ determine $\inset{c_i \suchthat i \in I }$.

By Property \ref{prop:dual} in Definition~\ref{def:megarepmat}, for all $\ell \in [rt]$, we have $M[:,\ell] \in \cC^\perp$.  
This means that for all $c \in \cC$, and for all $\ell \in [rt]$, 
\begin{align*}
0 &= \sum_{i \in [n]} c_i \cdot M[i,\ell] \\
\sum_{i \in I} c_i \cdot M[i,\ell] &= -\sum_{j \in [n] \setminus I } c_j \cdot M[j,\ell] \\ 
\tr_{\F/B} \inparen{ \sum_{i \in I} c_i \cdot M[i,\ell] } &= \tr_{\F/B} \inparen{ -\sum_{j \in [n] \setminus I} c_j \cdot M[j,\ell] } \\
\sum_{i \in I} \tr_{\F/B} (c_i \cdot M[i,\ell] ) &= -\sum_{j \in [n] \setminus I} \tr_{\F/B} (c_j \cdot M[j,\ell] ).
\end{align*}
We claim that the right-hand side above can be constructed from the values $\inset{g_{I,j}(c_j) \suchthat j \in [n] \setminus I}$.   Indeed, write $M[j,\ell] = \sum_{i = 1}^{b_{I,j}} a_{i,\ell,j} \lambda_i$ for some coefficients $a_{i,\ell,j} \in B$.  Then,
\begin{align*}
-\sum_{j \in [n] \setminus I} \tr_{\F/B} (c_j \cdot M[j,\ell]) 
&= -\sum_{j \in [n] \setminus I} \tr_{\F/B} \inparen{ c_j \cdot \sum_{i=1}^{b_{I,j}} a_{i,\ell,j} \lambda_i } 
= -\sum_{j \in [n] \setminus I} \sum_{i=1}^{b_{I,j}} a_{i,\ell,j} \tr_{\F/B} \inparen{ c_j \cdot \lambda_i },
\end{align*}
and the values $\tr_{\F/B} \inparen{ c_j \cdot \lambda_i }$ are precisely what is returned by $g_{I,j}(c_j)$.  
Thus, given the returned information, the repair center can reconstruct the quantities
\begin{equation}\label{eq:whatweget}
	\sum_{i \in I} \tr_{\F/B} (c_i \cdot M[i,\ell])  \qquad \forall \ell \in [rt].
\end{equation}

Finally, we invoke Property~\ref{prop:fullrank} in Definition~\ref{def:megarepmat} to show that \eqref{eq:whatweget} in fact contain enough information to recover $\inset{ c_i \suchthat i \in I}$.  To see this, consider the map
$ \vphi: \F^r \to B^{rt} $
given by
\[ \vphi(x) = \tr_{\F/B}\inparen{ x^T \cdot M[I,:] }, \]
where the multiplication is done over $\F$ and the trace is applied entry-wise.
That is,
\[ \vphi(x) = ( \tr_{\F/B}(\ip{ x }{ M[I,1] }), \tr_{\F/B}( \ip{x}{M[I,2]} ), \ldots, \tr_{\F/B}( \ip{x}{M[I,rt]} ) ). \]
We will show that $\vphi$ is invertible.  
To see this, consider the map
$ \psi: B^{rt} \to \F^r $
given by
\[ \psi(y) = M[I,:] \cdot y. \]
This map is clearly $B$-linear and 
Property~\ref{prop:fullrank} says that $\psi$ is injective.
By counting dimensions (over $B$), $\psi$ is surjective as well.  
To conclude, we will observe that $\psi$ is the adjoint of $\vphi$, in the sense that for all $y \in B^{rt}$ and for all $x \in \F^r$, we have
\[ \ip{ \vphi(x) }{y} = \tr_{\F/B}\inparen{ \ip{x }{\psi(y) }}, \]
and hence since $\psi$ is invertible then $\vphi$ is invertible.
Formally, we compute
\begin{align*}
\ip{\vphi(x) }{y} &= \sum_{j \in [rt]} y_j \cdot \tr_{\F/B}( \ip{x}{M[I,j]} ) \\
&= \tr_{\F/B}\inparen{ \sum_{j \in [rt]} y_j \cdot \ip{x}{M[I,j]}  } \\
&= \tr_{\F/B}\inparen{ \ip{x}{ M[I,:]\cdot y} } \\
&= \tr_{\F/B} \inparen{ \ip{x}{ \psi(y) } }.
\end{align*}
Now, we would like to show that $\vphi$ is injective.  Let $x \in \F^r$ be nonzero.  Then there is some $z \in \F^r$ so that $\tr_{\F/B}(\ip{x}{z}) \neq 0$.  Because $\psi$ is surjective, there is some $y \in B^{rt}$ so that $\psi(y) = z$.  But then
\[ \ip{\vphi(x)}{y} = \tr_{\F/B}(\ip{x}{\psi(y)}) = \tr_{\F/B}(\ip{x}{z}) \neq 0, \]
and hence $\vphi(x) \neq 0$ as well.  This shows that $\vphi$ is injective; again by dimension counting, we see that $\vphi$ is also surjective and hence invertible.

Thus, given $\vphi(x)$, we may recover $x$ via linear algebra.
To complete the argument, we observe that the quantities \eqref{eq:whatweget} in fact give us $\vphi(c_I)$, where we recall that $c_I$ denotes the restriction of $c$ to $I$.  Thus, given \eqref{eq:whatweget}, we may invert $\vphi$ and recover $\inset{c_i \suchthat c \in I}$, as desired.
\end{proof}

\section{Centralized repair schemes for RS codes with multiple failures}\label{sec:rs_r}
In this section we specialize Theorem~\ref{thm:main} to Reed-Solomon codes.
The \em Reed-Solomon Code \em $\cC$ of dimension $k$ over $\F$ with evaluation points $\alpha_1,\ldots,\alpha_n$ is the set
\[ \cC = \inset{ (f(\alpha_1), f(\alpha_2), \ldots, f(\alpha_n) ) \suchthat f \in \F[X], \deg(f) < k }. \]
The dual of any Reed-Solomon Code is a \em generalized Reed-Solomon Code: \em
\begin{equation}\label{eq:grs}
 \cC^\perp = \inset{ (  \lambda_1 \cdot g(\alpha_1),  \lambda_2 \cdot g(\alpha_2), \ldots,  \lambda_n \cdot g(\alpha_n) ) \suchthat g \in \F[X], \deg(g) < n - k },
\end{equation}
where $\lambda_1,\ldots, \lambda_n \in F$ are constants which depend on the choice of $\alpha_1,\ldots, \alpha_n$.  (We refer the reader to\cite{macwilliams-sloane} 
for more details).

Below, we give two constructions instantiating Theorem~\ref{thm:main} for Reed-Solomon Codes.  Our first scheme, discussed in Section~\ref{sec:rs_r_large}, is much better as $n \to \infty$.  However, for small values of $n$, our second scheme, discussed in Section~\ref{sec:rs_r_small}, is quantitatively better, and so we include it as it may be of more practical interest.  

Both of our schemes generalize the construction of \cite{gw16,DM17} for a single failure.
We briefly review this construction below, as we will need it for our constructions.
The scheme of \cite{gw16} was as follows:
\begin{proposition}[\cite{gw16}]\label{prop:repairscheme}
	Let $n = |\F|$, and let
	Let $\cC \subseteq \F^n$ be the Reed-Solomon code of dimension $k = n - n/|B|$, which uses all evaluation points $\F = \inset{ \alpha_1,\ldots,\alpha_n }$.
	Let $\delta \in \F$, and let $\zeta_1,\ldots,\zeta_t$ be a basis for $\F$ over $B$.
	Then the matrix $M = M(\delta, \zeta_1,\ldots,\zeta_t)  \in \F^{n \times t}$ with
	\begin{equation}\label{eq:repairscheme}
	M[j,w] = \frac{ \delta \cdot \tr( \zeta_w (\alpha_j - \alpha_{i}) ) }{ \alpha_j - \alpha_{i} } 
	\end{equation}
	is a repair matrix for index $i$ with bandwith $n-1$ symbols of $B$. 
\end{proposition}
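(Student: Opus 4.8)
The plan is to verify the three defining properties of a repair matrix from Definition~\ref{def:repmat} for the matrix $M = M(\delta, \zeta_1, \ldots, \zeta_t)$. The main structural fact we will use is the description of the dual code $\cC^\perp$ as a generalized Reed-Solomon code from \eqref{eq:grs}: since $\cC$ uses all $n = |\F|$ evaluation points, the multipliers $\lambda_j$ can be taken to be $1$ (this is the standard fact that $\prod_{j' \neq j}(\alpha_j - \alpha_{j'})$ is constant when the evaluation points are all of $\F$, or equivalently that the dual of the full-length RS code of dimension $n - n/|B|$ is the RS code of dimension $n/|B|$), so to check Property~1 it suffices to show each column $M[:,w]$ is the evaluation vector of a polynomial of degree less than $n - k = n/|B|$.

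First I would handle Property~1. Fix a column index $w$ and expand the trace in the numerator of \eqref{eq:repairscheme}: writing $\tr(\zeta_w(\alpha_j - \alpha_i)) = \sum_{\ell=0}^{t-1} (\zeta_w (\alpha_j - \alpha_i))^{|B|^\ell}$, each term $(\zeta_w(\alpha_j - \alpha_i))^{|B|^\ell}$ is, as a function of $\alpha_j$, a polynomial in $\alpha_j$ that is divisible by $(\alpha_j - \alpha_i)$ (it vanishes at $\alpha_j = \alpha_i$), of degree exactly $|B|^\ell$. Dividing by $(\alpha_j - \alpha_i)$ and multiplying by $\delta$, the $w$-th column of $M$ is the evaluation at the $\alpha_j$ of a fixed polynomial $g_w(X)$ of degree at most $|B|^{t-1} - 1 = n/|B| - 1 < n - k$. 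Hence each column lies in $\cC^\perp$. (One small point to be careful about: the expression $\tr(\zeta_w(\alpha_j - \alpha_i))/(\alpha_j - \alpha_i)$ is literally a ratio of field elements for $j \neq i$; the argument is that this ratio equals the polynomial $g_w$ evaluated at $\alpha_j$, which is where the divisibility is used.)

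Next, Property~3 (the bandwidth bound). For each $j \neq i$, the entries of row $M[j,:]$ are $\delta \cdot \tr(\zeta_w(\alpha_j - \alpha_i))/(\alpha_j - \alpha_i)$ for $w = 1, \ldots, t$. Since $\tr$ takes values in $B$ and the common factor $\delta/(\alpha_j - \alpha_i)$ is a single element of $\F$, every entry of row $j$ lies in the one-dimensional $B$-subspace $\spn_B\{\delta/(\alpha_j - \alpha_i)\}$, so $\dim_B \spn_B \st{M[j,:]} \le 1$ for each of the $n-1$ indices $j \neq i$, giving bandwidth at most $n-1$. For Property~2, I need the elements of row $M[i,:]$ to have full rank over $B$ — but here the formula \eqref{eq:repairscheme} is a $0/0$ expression at $j = i$, so one interprets $M[i,w]$ as $g_w(\alpha_i)$, i.e., the value obtained by formally differentiating / taking the limit, which works out to $\delta \cdot \zeta_w$ (the term $\ell = 0$ of the trace contributes $\zeta_w$ and all higher terms, being divisible by $(\alpha_j - \alpha_i)$ to a power $\ge 2$ after the Frobenius, vanish; more precisely $\frac{d}{dX}(\zeta_w(X - \alpha_i))^{|B|^\ell}$ at $X = \alpha_i$ is nonzero only for $\ell = 0$ since Frobenius kills derivatives). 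Then $M[i,:] = \delta \cdot (\zeta_1, \ldots, \zeta_t)$, and since $\delta \neq 0$ and $\{\zeta_w\}$ is a $B$-basis of $\F$, these elements have full rank $t$ over $B$.

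The step I expect to be the main obstacle is the careful treatment of the $i$-th row: reconciling the closed-form $0/0$ expression in \eqref{eq:repairscheme} with the ``correct'' value $\delta \zeta_w$, which requires expanding the trace term-by-term and tracking which terms survive after dividing by $(\alpha_j - \alpha_i)$ and setting $\alpha_j = \alpha_i$ — equivalently, identifying each column with an explicit low-degree polynomial $g_w$ and evaluating $g_w(\alpha_i)$. Once that identification is made, Properties~1 and 3 are essentially immediate, and Property~2 reduces to the basis condition on the $\zeta_w$'s. I would also need to double-check (citing the dual RS description) that $k = n - n/|B|$ indeed forces $\deg < n - k = n/|B|$, matching the degree bound $n/|B| - 1$ obtained above, so the construction is tight.
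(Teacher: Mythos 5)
Your proposal is correct and follows essentially the same route as the paper: expand the trace term-by-term to identify each column with the explicit polynomial $h_w(X) = \delta\bigl(\zeta_w + \zeta_w^{|B|}(X-\alpha_i)^{|B|-1} + \cdots + \zeta_w^{|B|^{t-1}}(X-\alpha_i)^{|B|^{t-1}-1}\bigr)$ of degree $< n-k$, read off $h_w(\alpha_i)=\delta\zeta_w$ for the full-rank row, and note $h_w(\alpha_j)\in \frac{\delta}{\alpha_j-\alpha_i}B$ for the bandwidth bound. Your care about the $0/0$ interpretation at $j=i$ and the implicit requirement $\delta\neq 0$ are both appropriate and consistent with the paper's treatment.
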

To see that this is indeed a valid repair matrix for $i$, observe that the polynomial
\[ h_w(X) = \frac{ \delta \cdot \tr_{\F/B}( \zeta_w (X - \alpha_{i}) ) }{ X - \alpha_{i} }
= \delta \inparen{ \zeta_w + \zeta_w^{|B|}(X - \alpha_{i})^{|B| - 1} + \cdots + \zeta_w^{|B|^{t-1}} (X - \alpha_{i})^{|B|^{t-1} -1} } \]
is indeed a polynomial of degree less than $n - k = n/|B| = |B|^{t-1}$, and so the column $M[:,j]$ is an element of $\cC^\perp$.  Moreover, we have $h_w(\alpha_{i}) = \delta \zeta_w$, and so $M[i,:] = ( \delta \zeta_1, \delta \zeta_2,\ldots, \delta \zeta_t )$ is full rank.  Finally, for all $j \neq i$, 
\[ h_w(\alpha_j) \in \frac{ \delta }{\alpha_j - \alpha_{i} } \cdot B, \]
and hence $\spn_B( \st{M[j,:]} ) = \spn_B\inset{ h_w(\alpha_j) \suchthat w \in [t] }$
has dimension $1$ over $B$, and so the bandwidth of the repair matrix is $n-1$.

In \cite{DM17}, it was observed that the trace function above can be replaced with an arbitrary linearized polynomial.
More precisely,
let $W \subseteq F$ be a subspace of dimension $s$ over $B$.  The \em subspace polynomial \em defined by $W$ is
\[ L_W(X) = \prod_{\alpha \in W}(X - \alpha). \]
It is well-known that $L_W$ is a $B$-linear map from $F$ to $F$, of the form
\begin{equation}\label{eq:linearizedform}
 L_W(X) = \sum_{j=0}^s c_j X^{|B|^j} 
\end{equation}
for coefficients $c_0,\ldots,c_{s} \in F$.  In particular, there is no constant term; $X$ divides $L_W(X)$.
Moreover, the coefficient $c_0$ is nonzero, as
\[ c_0 = \sum_{\alpha \in W} \prod_{ \beta \in W\setminus\{\alpha\} } \beta = \prod_{\beta \in W \setminus \{0\}} \beta \neq 0. \]
  Since the kernel of $L_W$ is $W$, the image of $L_W$ is a subspace of dimension $t-s$ over $B$.  

With this background in place, we proceed to our constructions.

\subsection{Main construction}\label{sec:rs_r_large}

Our main construction for Reed-Solomon codes generalizes the construction of~\cite{gw16,DM17}.  In particular, we will choose $rt$ different low-degree polynomials of a form similar to \eqref{eq:repairscheme}; we follow \cite{DM17} and replace the trace function with a linearized polynomial.  The key is to choose an appropriate modification so that the requirements of Theorem~\ref{thm:main} hold. 
We will see how to do this below, but first we state our main result.

\begin{theorem}\label{thm:mainrs_large}
Let $B \subseteq F$ be a subfield, and let $\cC \subset F^n$ be a Reed-Solomon Code of dimension $k$.  
Choose $1 \leq r \leq n- k$.
Then $\cC$ admits an exact centralized repair scheme for $r$ failures with bandwidth at most
\begin{equation}\label{eq:ourresult}
 b \leq \min_{r' \geq r} \inparen{ (n - r') \inparen{ t - \left\lfloor \log_{|B|}\inparen{ \frac{n - k + r' -1 }{ 2r' - 1 } }\right\rfloor } }. 
\end{equation}
\end{theorem}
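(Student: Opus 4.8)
The plan is to derive this from Theorem~\ref{thm:main} via an explicit family of multiple-repair matrices, together with a ``padding'' reduction that produces the outer minimum over $r'$. First note that if $I \subseteq [n]$ has $|I| = r$ and we enlarge it arbitrarily to $I' \supseteq I$ with $|I'| = r'$ for some $r \le r' \le n-k$, then any exact centralized repair scheme that repairs all of $I'$ a fortiori repairs all of $I$, while downloading only from the $n-r'$ nodes outside $I'$. So it is enough to show: for every $r'$ with $r \le r' \le n-k$ and every $I' \subseteq [n]$ with $|I'| = r'$, there is a multiple-repair matrix for $I'$ with bandwidth at most $(n-r')(t-s)$, where $s := \left\lfloor \log_{|B|}\inparen{\frac{n-k+r'-1}{2r'-1}}\right\rfloor$. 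Plugging this into Theorem~\ref{thm:main} (with its ``$r$'' set to $r'$) and taking the best such $r'$ yields \eqref{eq:ourresult}.

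For the matrix itself I would generalize the single-failure scheme of \cite{gw16,DM17}. Write $E(X) = \prod_{i \in I'}(X - \alpha_i)$, of degree $r'$; let $W \subseteq \F$ be a $B$-subspace of dimension $s$ with subspace polynomial $L_W$ (so $L_W$ is $B$-linear, has the form \eqref{eq:linearizedform} with no constant term and $c_0 \neq 0$, and image of $B$-dimension $t-s$); and fix a basis $\zeta_1,\ldots,\zeta_t$ of $\F$ over $B$. Take the $r't$ columns of $M$ to be the generalized-Reed-Solomon dual codewords \eqref{eq:grs} coming from the polynomials
\[ g_{\ell,w}(X) \;=\; \frac{L_W\inparen{\zeta_w\, X^{\ell-1}\, E(X)}}{E(X)} \;=\; \sum_{i=0}^{s} c_i\, \zeta_w^{|B|^i}\, X^{(\ell-1)|B|^i}\, E(X)^{|B|^i - 1}, \qquad \ell \in [r'],\ w \in [t]. \]
The right-hand expansion shows that each $g_{\ell,w}$ is a genuine polynomial, of degree at most $(2r'-1)|B|^s - r'$, which is $< n-k$ by the choice of $s$; hence every column lies in $\cC^\perp$, which is Property~\ref{prop:dual}.

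It then remains to check Properties~\ref{prop:fullrank} and \ref{prop:lowdim}. For a surviving node $j \notin I'$ we have $E(\alpha_j) \neq 0$, and $g_{\ell,w}(\alpha_j) = E(\alpha_j)^{-1} L_W\inparen{\zeta_w \alpha_j^{\ell-1} E(\alpha_j)} \in E(\alpha_j)^{-1}\cdot\Img(L_W)$ for \emph{every} $\ell$ and $w$; since this is one fixed $B$-subspace of dimension $t-s$, we get $\dim_B \spn_B\inset{\st{M[j,:]}} \le t-s$, hence total bandwidth $\le (n-r')(t-s)$ (Property~\ref{prop:lowdim}). For a failed node $i \in I'$ we have $E(\alpha_i) = 0$, so every term with index $\ge 1$ in the expansion vanishes and $g_{\ell,w}(\alpha_i) = c_0\, \zeta_w\, \alpha_i^{\ell-1}$. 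Consequently, for $x = (x_{\ell,w}) \in B^{r't}$, the restriction $M[I',:]\cdot x$ vanishes iff (up to the nonzero GRS weights and $c_0$) the polynomial $\sum_{\ell=1}^{r'}\inparen{\sum_{w} x_{\ell,w}\zeta_w}X^{\ell-1}$, of degree $< r'$, vanishes at all $r'$ points $\alpha_i$, $i \in I'$ --- i.e.\ iff it is identically zero, i.e.\ iff $x = 0$ (using that the $\zeta_w$ are a $B$-basis). This is Property~\ref{prop:fullrank}, and Theorem~\ref{thm:main} then completes the argument.

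The step I expect to be the crux is arranging Properties~\ref{prop:fullrank} and \ref{prop:lowdim} simultaneously: Property~\ref{prop:lowdim} wants all evaluations at a surviving node trapped in one low-dimensional $B$-space, while Property~\ref{prop:fullrank} forces the evaluations at the $r'$ failed nodes to be ``full rank''. The naive generalization of \cite{gw16,DM17} --- one length-$t$ family per failed node, multiplied by the vanishing polynomial of the other failed nodes --- does \emph{not} work, because at a given surviving node the $r'$ families land in $r'$ different scaled copies of $\Img(L_W)$, whose $B$-span is typically all of $\F$. The device above, using $E(X)$ both inside $L_W$ and as a common denominator, is what confines all surviving-node evaluations to a single coset-scaled copy of $\Img(L_W)$; the price is raising the polynomial degree from the ``expected'' $\approx r'|B|^s$ to $\approx (2r'-1)|B|^s$, which is exactly the origin of the $2r'-1$ in the statement. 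The remaining points --- the degree computation, that $s$ is a well-defined nonnegative integer for $r' \le n-k$, the padding bookkeeping, and the sanity check that $r=1$ recovers the optimal scheme of \cite{gw16} --- are routine.
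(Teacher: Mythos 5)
Your proposal is correct and follows essentially the same route as the paper: the same polynomials $L_W\bigl(\zeta\, X^{p-1} F_{I'}(X)\bigr)/F_{I'}(X)$, the same degree bound $(2r'-1)|B|^s - r'$ fixing $s$, the same confinement of surviving-node evaluations to $F_{I'}(\alpha_j)^{-1}\Img(L_W)$, the same degree-$<r'$ polynomial argument for full rank of $M[I',:]$, and the same padding reduction (the paper's Observation~\ref{obs:CanUseMin}) to obtain the minimum over $r'$. No substantive differences.
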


\begin{remark}[Optimality?]
Theorem ~\ref{thm:mainrs_large} works for any value of $r$ between $1$ and $n-k$, and moreover it is optimal for scalar MDS codes at both $r=1$ and $r=n-k$.
Moreover, it produces non-trivial bounds for $r \ll n^R$, where $R = k/n$ is the rate of the code.  However, when $r \approx n^R$, the expression above becomes trivial (equal to $kt$), and we do not know whether or not this is a fundamental limit or an artifact of our approach.
The behavior of \eqref{eq:ourresult} is shown in Figure~\ref{fig:quantitative}.
\end{remark}

\begin{proof}[Proof of Theorem~\ref{thm:mainrs_large}]
Let $W \subseteq F$ be a subspace of dimension $s$ over $B$, and let $L_W$ be the subspace polynomial 
\[ L_W(X) = \prod_{\alpha \in W}(X - \alpha). \]
As above, let $I \subset [n]$ be the set of failed nodes, so $|I| = r$.  
Let $Z = \inset{ \zeta_1,\ldots, \zeta_t}$ be a basis for $F$ over $B$.
Define
\[ F_I(X) = \prod_{\alpha \in I}(X - \alpha),\]
and for $p \in [r]$, define
\[ P_{\zeta,p}(X) = \frac{ L_W\inparen{ \zeta \cdot F_I(X) \cdot X^{p-1} } }{F_I(X)}.\]
This is a generalization of the construction in \cite{gw16,DM17} to multiple failures: if $r = 1$ and $\alpha_1$ is the only failed node, then the above is
\[ P_{\zeta,1}(X) = \frac{ L_W\inparen{ \zeta \cdot (X - \alpha_1) } }{(X - \alpha_1)}, \]
exactly as in \cite{DM17}.  

Now, we construct our repair matrix $M \in F^{n \times rt}$ as follows.  We index the rows of $M$ by $j \in [n]$, and the columns by pairs $(\zeta, i)$ for $\zeta \in Z$ and $p \in [r]$.
Then we set
\[ M[j, (\zeta,p)] := \lambda_j P_{\zeta,p}(\alpha_j), \]
where $\lambda_j$ is as in \eqref{eq:grs}.  

We must show that $M$ satisfies the conditions of Theorem~\ref{thm:main}.  
In particular, we will show that $M$ has the form shown in Figure~\ref{fig:construction}, which as we will see below implies that Theorem~\ref{thm:main} applies.
First, we compute the bandwidth of $M$.
\begin{claim}\label{claim:bw}
The bandwidth of $M$ is at most $(n-r)(t-s)$.
\end{claim}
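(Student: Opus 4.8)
The plan is to bound the bandwidth row-by-row, exploiting that the linearized map $L_W$ has image of $B$-dimension exactly $t-s$. By Property~\ref{prop:lowdim} of Definition~\ref{def:megarepmat}, the bandwidth of $M$ equals $\sum_{j \in [n] \setminus I} \dim_B \spn_B\{ \st{M[j,:]} \}$, so it suffices to prove $\dim_B \spn_B\{ \st{M[j,:]} \} \le t-s$ for each of the $n-r$ indices $j \in [n]\setminus I$; summing these bounds gives the claim.

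First I would fix $j \in [n]\setminus I$ and note that $\alpha_j$ is not among the roots $\inset{\alpha_i : i \in I}$ of $F_I$, so $F_I(\alpha_j) \ne 0$ and $P_{\zeta,p}(\alpha_j)$ is a well-defined element of $F$, namely
\[ P_{\zeta,p}(\alpha_j) = \frac{1}{F_I(\alpha_j)} \, L_W\!\inparen{ \zeta \cdot F_I(\alpha_j) \cdot \alpha_j^{\,p-1} }. \]
Consequently every entry $M[j,(\zeta,p)] = \lambda_j P_{\zeta,p}(\alpha_j)$ of the $j$-th row lies in $c_j \cdot \Img(L_W)$, where $c_j := \lambda_j / F_I(\alpha_j) \in F^*$ depends only on $j$. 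Hence $\spn_B\{ \st{M[j,:]} \} \subseteq c_j \cdot \Img(L_W)$.

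The one point to be careful about — and the closest thing to an obstacle — is that $c_j$ lies in $F$ rather than $B$, so scaling by $c_j$ is not \emph{a priori} compatible with taking $B$-spans; the fix is that multiplication by any nonzero element of $F$ is a $B$-linear bijection of $F$ onto itself, and therefore sends a $B$-subspace of dimension $d$ to one of dimension $d$. Since $\Ker(L_W) = W$ has $\dim_B W = s$, the subspace $\Img(L_W)$ has dimension $t-s$ over $B$ (as already recorded in the excerpt), so
\[ \dim_B \spn_B\{ \st{M[j,:]} \} \le \dim_B\inparen{ c_j \cdot \Img(L_W) } = t-s, \]
and summing over the $n-r$ indices $j \in [n]\setminus I$ yields bandwidth at most $(n-r)(t-s)$.

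Beyond that $B$-versus-$F$ scaling subtlety I do not expect any real difficulty: once one observes that each non-failed row is contained in a single $F$-scaling of the image of $L_W$, the bound is immediate, and the only facts needed are the nonvanishing of $F_I(\alpha_j)$ for $j \notin I$ and the dimension count for $\Img(L_W)$, both handled above. (The genuinely delicate requirements of Theorem~\ref{thm:main} for this choice of $M$ — that the columns $P_{\zeta,p}$ are low-degree polynomials lying in $\cC^\perp$, i.e.\ Property~\ref{prop:dual}, and that $M[I,:]$ has full $B$-rank, i.e.\ Property~\ref{prop:fullrank} — are orthogonal to this bandwidth computation and would be argued separately, presumably from the explicit linearized form \eqref{eq:linearizedform} of $L_W$ and the structure of $F_I$.)
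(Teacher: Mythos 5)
Your proof is correct and follows essentially the same route as the paper: each entry of row $j$ (for $j \notin I$) lies in $\frac{\lambda_j}{F_I(\alpha_j)}\cdot \Img(L_W)$, a $B$-subspace of dimension $t-s$, and summing over the $n-r$ surviving rows gives the bound. Your explicit remark that multiplication by a nonzero element of $F$ is a $B$-linear bijection (so the scaled image still has dimension $t-s$) is a small justification the paper leaves implicit, but the argument is the same.
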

\begin{proof}
For $j \not\in I$, the set of symbols $\st{M[j,:]}$ that appears in row $j$ of $M$ is precisely
\begin{align*}
 \st{M[j,:]} &= \inset{ \lambda_j \cdot p_{\zeta, p}(\alpha_j) \,:\, \zeta \in Z, p \in [r] }\\
&= \inset{ \lambda_j \cdot \frac{ L_W( \zeta \cdot F_I(\alpha_j) \cdot \alpha_j^{p-1}) }{ F_I(\alpha_j) } \,:\, \zeta \in Z, p \in [r] } \\
&\subseteq \frac{\lambda_j}{F_I(\alpha_j)} \Img( L_W ),
\end{align*}
which is a subspace of dimension $t - s$.  Since there are $n-r$ such $j$'s by definition the bandwidth of $M$ is bounded by
\[ (n-r)(t-s), \]
as desired.
\end{proof}
We next show that $M[I,:]$ has full rank, in the sense that $M[I,:] x \neq 0$ for all nonzero $x \in B^{rt}$.
\begin{claim}
$M[I,:]$ has full rank.
\end{claim}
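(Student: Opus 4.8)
The plan is to show that the map $x \mapsto M[I,:]\cdot x$ is injective on $B^{rt}$ by understanding the structure of the submatrix $M[I,:]$ directly. The rows of $M[I,:]$ are indexed by the failed nodes $\alpha_i$ for $i \in I$, and the columns by pairs $(\zeta, p)$ with $\zeta \in Z$ and $p \in [r]$. The entry is $M[\alpha_i, (\zeta,p)] = \lambda_i P_{\zeta,p}(\alpha_i)$. First I would observe that $\lambda_i \neq 0$ (it is one of the GRS constants from \eqref{eq:grs}), so it suffices to analyze the matrix $N$ with $N[i,(\zeta,p)] = P_{\zeta,p}(\alpha_i)$. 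The key is to evaluate $P_{\zeta,p}$ at a failed point: since $F_I(X) = \prod_{\alpha \in I}(X-\alpha)$ vanishes at every $\alpha_i$ with $i \in I$, and $L_W(Y) = \sum_{j=0}^s c_j Y^{|B|^j}$ has no constant term (indeed $X \mid L_W(X)$), we can factor out $F_I(X)$ from the numerator $L_W(\zeta F_I(X) X^{p-1})$ and then cancel it. Concretely, $L_W(\zeta F_I(X) X^{p-1}) = c_0 \zeta F_I(X) X^{p-1} + \sum_{j \geq 1} c_j \zeta^{|B|^j} F_I(X)^{|B|^j} X^{(p-1)|B|^j}$, so dividing by $F_I(X)$ gives $P_{\zeta,p}(X) = c_0 \zeta X^{p-1} + F_I(X)\cdot(\text{stuff})$. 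Evaluating at $\alpha_i$ for $i \in I$ kills the second term, leaving $P_{\zeta,p}(\alpha_i) = c_0 \zeta \alpha_i^{p-1}$.

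With this evaluation in hand, the matrix $N$ factors as a tensor/Kronecker-type product: up to the nonzero scalar $c_0$, the column $(\zeta, p)$ is $\zeta \cdot (\alpha_i^{p-1})_{i \in I}$. More precisely, $N = c_0 \cdot (V \otimes \zeta\text{-structure})$ where one factor is the $r \times r$ Vandermonde matrix $V_{i,p} = \alpha_i^{p-1}$ on the (distinct) points $\alpha_i$, $i \in I$, and the other "factor" records the multiplication-by-$\zeta$ action. The cleanest way to finish is the following: suppose $M[I,:]\cdot x = 0$ for some nonzero $x \in B^{rt}$; write $x$ with coordinates $x_{\zeta,p} \in B$. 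Then for each $i \in I$,
\[
0 = \sum_{\zeta \in Z}\sum_{p \in [r]} x_{\zeta,p}\, \lambda_i c_0 \zeta \alpha_i^{p-1} = \lambda_i c_0 \sum_{p \in [r]} \alpha_i^{p-1}\Bigl(\sum_{\zeta \in Z} x_{\zeta,p}\,\zeta\Bigr).
\]
Set $y_p := \sum_{\zeta \in Z} x_{\zeta,p}\,\zeta \in F$; since $Z$ is a $B$-basis of $F$ and the $x_{\zeta,p}$ are in $B$, the vector $x$ is nonzero iff some $y_p \neq 0$. The displayed equations say that the polynomial $g(X) := \sum_{p \in [r]} y_p X^{p-1} \in F[X]$, which has degree at most $r-1$, vanishes at the $r$ distinct points $\{\alpha_i : i \in I\}$. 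Hence $g \equiv 0$, so all $y_p = 0$, so $x = 0$, a contradiction. This proves full rank.

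The main obstacle — though it is really just a bookkeeping point rather than a deep one — is getting the cancellation $P_{\zeta,p}(\alpha_i) = c_0\zeta\alpha_i^{p-1}$ cleanly right, i.e.\ correctly justifying that $F_I(X)$ divides the numerator $L_W(\zeta F_I(X) X^{p-1})$ as a polynomial (so that $P_{\zeta,p}$ is genuinely a polynomial, not just a rational function) and that the quotient, reduced mod $F_I$, equals the linear term. This uses exactly the structure of linearized polynomials recalled in \eqref{eq:linearizedform}: $L_W$ is additive over $B$ and has the monomial form $\sum_j c_j X^{|B|^j}$ with $c_0 \neq 0$, so $L_W(F_I(X)\cdot h(X)) \equiv c_0 F_I(X) h(X) \pmod{F_I(X)^2}$, and in particular is divisible by $F_I(X)$ with quotient $\equiv c_0 h(X) \pmod{F_I(X)}$. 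Taking $h(X) = \zeta X^{p-1}$ gives what we need; everything else is a one-line Vandermonde argument. I would also double-check the edge cases $r=1$ (recovering the claim that $M[i,:]$ is full rank, as in Proposition~\ref{prop:repairscheme}) and note that distinctness of the $\alpha_i$ is exactly what makes the Vandermonde nonsingular.
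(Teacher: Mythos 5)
Your proposal is correct and follows essentially the same route as the paper: evaluate $P_{\zeta,p}$ at the failed points via the linearized structure of $L_W$ (only the $m=0$ term survives, giving $P_{\zeta,p}(\alpha_i) = c_0\zeta\alpha_i^{p-1}$), then observe that a nontrivial $B$-kernel element would force a nonzero polynomial of degree at most $r-1$ to vanish at the $r$ distinct points $\{\alpha_i : i \in I\}$. Your Vandermonde framing and the explicit check that $F_I$ divides the numerator are just slightly more elaborated versions of the paper's identical argument.
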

\begin{proof}
Using \eqref{eq:linearizedform}, we write
\begin{align*}
 P_{\zeta, p}(X) &= \frac{ P_W( \zeta \cdot F_I(X) \cdot X^{p-1}) }{F_I(X) } \\
&= \frac{ \sum_{m=0}^s c_m \inparen{ \zeta \cdot F_I(X) \cdot X^{p-1} }^{|B|^m} }{ F_I(X) } \\
&= \sum_{m=0}^s c_m \cdot \zeta^{|B|^m} \cdot F_I(X)^{|B|^m - 1} \cdot X^{(p-1)\cdot |B|^m},
\end{align*}
and hence for $i \in I$, all terms vanish except the $m=0$ term, and we have
\[ P_{\zeta, p}(\alpha_i) = c_0 \cdot \zeta \cdot \alpha_i^{p-1}. \]

To show that $M[I,:]$ has full rank, by definition we must show that for all nonzero $x \in B^{rt}$, $M[I,:]x \neq 0$.  So let $x \in B^{rt} \setminus \{0\}$, and write $x = (x^{(1)}, x^{(2)}, \ldots, x^{(r)})$, where each $x^{(i)} \in B^t$.  By the above characterization, we have
\[ \ip{M[i,:]}{x} = \lambda_i \cdot \sum_{p=1}^r c_0 \alpha_i^{p-1} \ip{\vec{\zeta}}{x^{(p)}}, \]
where $\vec{\zeta} = (\zeta_1,\ldots,\zeta_t) \in F^t$.
Since the $\zeta_i$ form a basis of $F$ over $B$ and since $x \neq 0$, at least one of the coefficients $\ip{\vec{\zeta}}{x^{(p)}}$ is nonzero.  Thus, $\ip{M[i,:]}{x} = \lambda_i \cdot g(\alpha_i)$, where $g(X) = \sum_{p=1}^r c_0 \ip{\vec{\zeta}}{x^{(p)}} X^{p-1}$ is some nonzero polynomial of degree at most $r-1$.  This means that $M[i,:] \cdot x \neq 0$, or else this polynomial $g(X)$ would have at least $r$ roots, one for each $i \in I$.
\begin{figure}[h!]
\footnotesize
\begin{tikzpicture}[xscale=1.2,yscale=0.7]

\draw[thick] (0,0) rectangle (12,-8);
\draw (0,-1) to (12, -1);
\draw (0,-2) to (12, -2);
\draw[thick] (0, -3) to (12, -3);
\draw (0, -4) to (12, -4);
\draw (0, -5) to (12, -5);
\node at (6, -6) {$\cdots$};
\draw (0, -7) to (12, -7);
\draw (4,0) to (4,-3);
\draw (8,0) to (8,-3);
\node at (2,-.5) {$ \lambda_{i_1} \cdot c_0 \cdot \vec{\zeta}$};
\node at (6,-.5) {$\lambda_{i_1}\cdot c_0 \cdot \alpha_{i_1} \cdot \vec{\zeta}$};
\node at (10, -.5) {$\lambda_{i_1}\cdot c_0 \cdot \alpha_{i_1}^2 \cdot \vec{\zeta}$};
\node at (2,-1.5) {$\lambda_{i_2}\cdot c_0 \cdot \vec{\zeta}$};
\node at (6,-1.5) {$\lambda_{i_2}\cdot c_0 \cdot \alpha_{i_2} \cdot \vec{\zeta}$};
\node at (10, -1.5){$\lambda_{i_2}\cdot c_0 \cdot \alpha_{i_2}^2 \cdot \vec{\zeta}$};
\node at (2,-2.5) {$\lambda_{i_3}\cdot c_0 \cdot \vec{\zeta}$};
\node at (6,-2.5) {$\lambda_{i_3}\cdot c_0 \cdot \alpha_{i_3} \cdot \vec{\zeta}$};
\node at (10, -2.5) {$\lambda_{i_3}\cdot c_0 \cdot \alpha_{i_3}^2  \cdot \vec{\zeta}$};

\node at (6, -3.5) {$ \in \lambda_{j_1} \cdot (F_I(\alpha_{j_1}))^{-1} \cdot \Img(L_W) $ };
\node at (6, -4.5) {$ \in \lambda_{j_2} \cdot (F_I(\alpha_{j_2}))^{-1} \cdot \Img(L_W) $ };
\node at (6, -7.5) {$ \in \lambda_{j_3} \cdot (F_I(\alpha_{j_{n-3}}))^{-1} \cdot \Img(L_W) $ };

\draw [decorate,decoration={brace,amplitude=10pt,mirror},xshift=-4pt,yshift=0pt]
(0,0) -- (0,-3)node [black,midway,xshift=-11pt,anchor=east] {$M[I,:]$};
\draw [decorate,decoration={brace,amplitude=10pt},yshift=4pt,xshift=0pt]
(0,0) -- (4,0)node [black,midway,yshift=11pt,anchor=south] {$t$};
\draw [decorate,decoration={brace,amplitude=10pt,mirror},xshift=-4pt,yshift=0pt]
(0,-3) -- (0,-8)node [black,midway,xshift=-11pt,anchor=east] {$M[I^c,:]$};

\end{tikzpicture}
\normalsize
\caption{The matrix $M$ constructed in the proof of Theorem~\ref{thm:mainrs_large} for $r = 3$, where we write $I = \{i_1,i_2,i_3\}$, and $[n] \setminus I = \{j_1,\ldots, j_{n-3}\}$.
Here, $c_0$ is the coefficient from \eqref{eq:linearizedform}, and $\zeta = (\zeta_1,\ldots,\zeta_r)$ is a vector consisting of the elements of the basis $Z$.
}\label{fig:construction}
\end{figure}
\end{proof}

Finally, we must choose $s$.  We need for the columns of $M$ to be elements of $\cC^\perp$, which by \eqref{eq:grs} is the same as requiring the polynomials $P_{\zeta, p}(X)$ to have degree strictly less than $n - k$.
That is, we require
\[ |B|^s \cdot (2r - 1) - r \leq n - k - 1, \]
which is satisfied by the choice of 
\[ s = \left\lfloor \log_{|B|}\inparen{ \frac{ n - k + r - 1 }{ 2r - 1 } } \right\rfloor. \]
Plugging this choice of $s$ into the bandwidth bound of Claim~\ref{claim:bw} coupled with Observation~\ref{obs:CanUseMin} below finishes the proof of Theorem~\ref{thm:mainrs_large}.

\begin{observation}\label{obs:CanUseMin}
	Given $r' > r$ if $\cC$ admits an exact centralised repair scheme for $r'$ failures with bandwidth $b$ then it also admits an exact centralized repair scheme for $r$ failures with bandwidth $b$.
\end{observation}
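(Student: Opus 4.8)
The plan is to observe that this is essentially immediate from the definition. By Definition~\ref{def:exactrepairmult}, an exact centralized repair scheme for $r'$ failures already specifies, for \emph{every} set $I \subseteq [n]$ with $|I| \le r'$, a collection of functions $\{g_{I,j} : j \in [n]\setminus I\}$ with the property that $c_i$ is determined from $\{g_{I,j}(c_j) : j \in [n]\setminus I\}$ for all $i \in I$. Since $r < r'$, every set $I$ with $|I| \le r$ is in particular a set with $|I| \le r'$, so the first step is simply to restrict the given scheme to those sets: the collection $\{g_{I,j}\}$ over sets $I$ of size at most $r$ is, verbatim, a valid exact centralized repair scheme for $r$ failures.

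The second step is to bound the bandwidth. The bandwidth of the original scheme is $\max_{|I|\le r'} \sum_{j} b_{I,j} \le b$, while the bandwidth of the restricted scheme is $\max_{|I|\le r} \sum_{j} b_{I,j}$, a maximum taken over a subcollection of the index sets appearing in the first maximum; hence it is at most $b$ as well. (If one instead reads Definition~\ref{def:exactrepairmult} as requiring one to handle sets of size \emph{exactly} $r'$, the same conclusion follows by a padding argument: given $I$ with $|I| = r$, extend it to any $I' \supseteq I$ with $|I'| = r'$, run the scheme for $I'$ — declining to contact the nodes in $I' \setminus I$ costs nothing — and read off $\{c_i : i \in I\}$ from the recovered $\{c_i : i \in I'\}$; the bandwidth incurred for $I$ is exactly what the original scheme incurs for $I'$, namely at most $b$.)

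There is no real obstacle here; the only point requiring minimal care is matching the ``at most'' quantifier in the definition and noting that a maximum over a smaller index set can only decrease. This is precisely the fact that licenses taking $\min_{r' \ge r}$ in the bandwidth bound \eqref{eq:ourresult} of Theorem~\ref{thm:mainrs_large}.
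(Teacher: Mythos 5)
Your proposal is correct, and your parenthetical ``padding'' argument (extend $I$ to a set $I'$ of size $r'$ and ignore the extra nodes) is precisely the paper's own one-line proof. Your primary argument---that Definition~\ref{def:exactrepairmult} already quantifies over all sets of size \emph{at most} $r'$, so the scheme for $r$ failures is literally a sub-collection of the given one and the maximum defining the bandwidth only shrinks---is an equally valid and arguably even more direct route to the same conclusion.
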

\begin{proof}
	This is true because if we have $r < r'$ failures then we can simply ignore a further $r'-r$ nodes and use the repair scheme for $r'$ failures.
\end{proof}

\end{proof}

\subsection{Alternate construction for small block sizes}\label{sec:rs_r_small}
In this section we give another generalization of the one-failure scheme from Proposition~\ref{prop:repairscheme}. 
This scheme is worse asymptotically, but has better performance for small $n$, so it may be of practical value.
The basic idea directly generalizes (and improves upon) that of \cite{multfail2}; the multiple-repair matrix $M$ is formed by concatenating $r$ separate repair matrices $M_1,\ldots,M_r$ from Proposition~\ref{prop:repairscheme}.  In fact, Theorem~\ref{thm:main} immediately implies that this is a nontrivial repair scheme, but we can do better by choosing multipliers $\delta_1,\ldots,\delta_r \in \F$, and using the repair matrix formed by concatenating $\delta_1 M_1, \ldots, \delta_r M_r$.  We will show how to choose the multipliers $\delta_1,\ldots,\delta_r$ so that (a) the rank of $M[I,:]$ is not affected, but (b) the rank of the other rows $M[j,:]$ for $j \not\in I$ is reduced.

We will prove the following theorem.
\begin{theorem}\label{thm:mainrs_small}
	Let $n = |\F|$ and let $\cC \subseteq \F^n$ be as in Lemma~\ref{thm:sufficient}.  Let $B$ be a subfield of $\F$ so that $\F$ has degree $t$ over $B$.
	Choose $r \geq 2$ so that
	\[t > {r \choose 2} + \log_{|B|}(r\cdot(r+{r \choose 2}(|B|-1))\cdot (|B|-1) + 1).\]
	Then for all $I \subset [n]$ of size $r$, there is a matrix $M \in \F^{n \times rt}$ so that $M$ is a multiple repair matrix for $I$, with bandwidth 
	\[ b \leq (n-r)\cdot r - (|B| - 1){r \choose 2}. \]
\end{theorem}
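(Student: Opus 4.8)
Following the sketch above, take $M$ to be a column-concatenation of scaled single-failure repair matrices. Write $I=\{i_1,\dots,i_r\}$ and fix a basis $Z=\{\zeta_1,\dots,\zeta_t\}$ of $\F$ over $B$. For each $p\in[r]$ let $M_p\in\F^{n\times t}$ be the repair matrix of Proposition~\ref{prop:repairscheme} for the failed index $i_p$ (with its multiplier set to $1$), and set $M=[\,\delta_1M_1\mid\delta_2M_2\mid\cdots\mid\delta_rM_r\,]\in\F^{n\times rt}$, where $\delta_1,\dots,\delta_r\in\F^{*}$ are to be chosen. Property~\ref{prop:dual} of Definition~\ref{def:megarepmat} is immediate: the columns of each $M_p$ lie in $\cC^{\perp}$, which is an $\F$-subspace, so scaling by $\delta_p$ keeps them in $\cC^{\perp}$. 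Everything therefore reduces to choosing the $\delta_p$ so that Properties~\ref{prop:fullrank} and~\ref{prop:lowdim} hold with the advertised bandwidth.

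For Property~\ref{prop:lowdim} (bandwidth): fix $j\notin I$. By Proposition~\ref{prop:repairscheme}, $\st{M_p[j,:]}\subseteq\frac{1}{\alpha_j-\alpha_{i_p}}B$, so $\st{M[j,:]}\subseteq\bigcup_{p=1}^{r}\frac{\delta_p}{\alpha_j-\alpha_{i_p}}B$ and $\dim_B\spn_B\st{M[j,:]}\le r$. This dimension drops to at most $r-1$ exactly when two generators $\frac{\delta_p}{\alpha_j-\alpha_{i_p}}$ are $B$-proportional, i.e.\ when $\frac{\delta_p(\alpha_j-\alpha_{i_{p'}})}{\delta_{p'}(\alpha_j-\alpha_{i_p})}\in B^{*}$ for some $p<p'$. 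For each pair $p<p'$ and each unit $b\in B^{*}$ this is a linear equation in $\alpha_j$ with a unique root, provided $\delta_p/\delta_{p'}\ne b$; so if we arrange $\delta_p/\delta_{p'}\notin B^{*}$ for every pair, we obtain $\binom{r}{2}(|B|-1)$ such "collision points". If in addition these are pairwise distinct and disjoint from $\{\alpha_{i_1},\dots,\alpha_{i_r}\}$, then at least $\binom{r}{2}(|B|-1)$ rows $j\notin I$ have $\dim_B\spn_B\st{M[j,:]}\le r-1$ and the rest contribute at most $r$ each, so $b\le(n-r)r-\binom{r}{2}(|B|-1)$, as claimed. (A collision point shared by several pairs can only help, as long as no three generators become simultaneously proportional, which is why we insist on pairwise distinctness.)

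For Property~\ref{prop:fullrank} (full rank of $M[I,:]$): given $x=(x^{(1)},\dots,x^{(r)})\in B^{rt}$, put $y_p=\ip{(\zeta_1,\dots,\zeta_t)}{x^{(p)}}\in\F$; the map $x\mapsto(y_1,\dots,y_r)$ is a $B$-linear bijection $B^{rt}\to\F^{r}$. Exactly as in Proposition~\ref{prop:repairscheme} and the proof of Theorem~\ref{thm:mainrs_large}, one computes $M[I,:]x=(g(\alpha_{i_1}),\dots,g(\alpha_{i_r}))$ with $g(X)=\sum_{p=1}^{r}\delta_p\,\frac{\tr_{\F/B}(y_p(X-\alpha_{i_p}))}{X-\alpha_{i_p}}$, so $g(\alpha_{i_q})=\delta_qy_q+\sum_{p\ne q}\frac{\delta_p}{\alpha_{i_q}-\alpha_{i_p}}\,\tr_{\F/B}\!\big(y_p(\alpha_{i_q}-\alpha_{i_p})\big)$; Property~\ref{prop:fullrank} is precisely injectivity of $y\mapsto(g(\alpha_{i_q}))_q$. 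Writing $c_{pq}:=\tr_{\F/B}(y_p(\alpha_{i_q}-\alpha_{i_p}))\in B$, solving $g(\alpha_{i_q})=0$ for $y_q$ and re-substituting into the relations defining the $c_{pq}$ turns the existence of a nonzero kernel into the singularity of an explicit $r(r-1)\times r(r-1)$ matrix $N=N(\delta_1,\dots,\delta_r)$ over $B$, with $1$'s on its diagonal and off-diagonal entries of the form $\tr_{\F/B}\!\big(\frac{(\alpha_{i_s}-\alpha_{i_q})\delta_p}{(\alpha_{i_q}-\alpha_{i_p})\delta_q}\big)$. Inspecting its expansion shows $\det_B N$ is not identically zero as a function of the $\delta_p$ (for $r=2$ it is $1-\tr_{\F/B}(\delta_1/\delta_2)\tr_{\F/B}(\delta_2/\delta_1)$), so after clearing denominators the "full-rank-bad" tuples $(\delta_1,\dots,\delta_r)$ form a proper subvariety of controlled degree.

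It remains to exhibit a tuple $(\delta_1,\dots,\delta_r)\in(\F^{*})^{r}$ simultaneously avoiding all bad events: the $\binom{r}{2}$ coset constraints $\delta_p/\delta_{p'}\notin B^{*}$; the constraints making the $\binom{r}{2}(|B|-1)$ collision points pairwise distinct and disjoint from $\{\alpha_{i_1},\dots,\alpha_{i_r}\}$; and the full-rank locus. I would pick the $\delta_p$ greedily. Having fixed $\delta_1,\dots,\delta_{q-1}$, each collision point involving $\delta_q$ is a Möbius function of $\delta_q$ and must dodge the $r$ evaluation points together with the at most $\binom{r}{2}(|B|-1)$ other collision points, so at most $r\big(r+\binom{r}{2}(|B|-1)\big)(|B|-1)$ values of $\delta_q$ are forbidden on this count, plus $q-1$ further cosets and the bounded determinant locus. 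Comparing this against the number of still-admissible $\delta_q$, the hypothesis
\[ t>\binom{r}{2}+\log_{|B|}\!\Big(r\big(r+\tfrac12 r(r-1)(|B|-1)\big)(|B|-1)+1\Big),\quad\text{i.e.}\quad n/|B|^{\binom{r}{2}}>r\big(r+\tbinom{r}{2}(|B|-1)\big)(|B|-1)+1, \]
is exactly the threshold keeping every greedy step feasible (the $|B|^{\binom{r}{2}}$ factor absorbing the $\binom{r}{2}$ ratio-normalizations among the $\delta_p$ needed to keep the collision points and the matrix $N$ under control). Any such tuple yields the desired $M$, which then feeds into Theorem~\ref{thm:main}. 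The main obstacle is precisely this final accounting together with Paragraph~3: organizing the choice of the $\delta_p$ so that $N$ remains nonsingular without inflating the forbidden sets past the stated bound on $t$; by contrast the bandwidth half is a straightforward generalization of the two- and three-failure schemes of~\cite{multfail,multfail2}.
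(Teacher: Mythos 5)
Your bandwidth analysis follows the paper's route: concatenate scaled copies of the single-failure matrices of Proposition~\ref{prop:repairscheme}, observe that row $j\notin I$ lies in a union of $r$ cosets $\frac{\delta_p}{\alpha_j-\alpha_{i_p}}B^*$, and choose the $\delta_p$ greedily so that exactly $\binom{r}{2}(|B|-1)$ rows exhibit a single collision, each saving one dimension. Your count of forbidden values at each greedy step, $r\big(r+\binom{r}{2}(|B|-1)\big)(|B|-1)$, matches the paper's. This half is essentially correct, modulo the routine verification (carried out in the paper) that the collision points for distinct pairs can be kept disjoint from each other and from $I$.

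The full-rank half (Property~\ref{prop:fullrank} of Definition~\ref{def:megarepmat}) is where your argument has a genuine gap. You reduce invertibility of $M[I,:]$ to the nonsingularity of an $r(r-1)\times r(r-1)$ matrix $N$ over $B$ whose off-diagonal entries are traces of ratios of the $\delta$'s, verify $\det N\not\equiv 0$ only for $r=2$, and then assert that the bad tuples form ``a proper subvariety of controlled degree.'' Neither step is substantiated, and the second is unlikely to be salvageable as stated: each entry of $N$ involves $\tr_{\F/B}(\lambda\,\delta_p/\delta_q)$, and the trace is a polynomial of degree $|B|^{t-1}=n/|B|$, so after clearing denominators $\det N$ has degree comparable to $|\F|$ in each variable and a Schwartz--Zippel count over $\F$ excludes essentially nothing. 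In particular, your final accounting---that the full-rank constraint costs only a factor $|B|^{\binom{r}{2}}$ of the available choices---has no justification in your argument. The paper's resolution is different and is the key missing idea: it imposes the explicit conditions \eqref{eq:needthis}, namely $\tr_{\F/B}\big(\frac{\delta_\ell}{\delta_j}\cdot\frac{\alpha_s-\alpha_j}{\alpha_j-\alpha_\ell}\big)=0$ for $j<\ell$, $s>j$. For fixed $\delta_1,\dots,\delta_{\ell-1}$ these are \emph{$B$-linear} constraints on $\delta_\ell$, so at least $|B|^{t-(\ell-1)r+\ell(\ell-1)/2}-1$ choices survive (Claim~\ref{claim:FullRankTop}); and under these conditions Lemma~\ref{thm:sufficient} proves full rank deterministically, by an explicit LU-type triangularization of $M[I,:]$ in which the conditions force every sub-diagonal block to vanish, leaving $\vec{\zeta}$ on the diagonal. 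It is exactly the linearity of these constraints that produces the $|B|^{\binom{r}{2}}$ loss and hence the stated threshold on $t$; without replacing your generic-determinant argument by something of this kind, the theorem's hypothesis on $t$ cannot be matched.
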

\begin{remark}[Bandwidth guarantee]
	Observe that the naive scheme (which contacts any $k$ remaining nodes) has bandwidth $nk$, while the scheme which repeats the one-failure scheme $r$ times has bandwidth $(n-1)\cdot r$.   Thus, the guarantee that $b \leq (n-r)\cdot r - {r \choose 2}(|B| - 1)$ improves on both of these.  Moreover, when $r=1$, this collapses to the result of \cite{gw16} that $b \leq n-1$.   For $r = 2,3$, this improves over the result $b \leq (n-r) \cdot r$ of \cite{multfail}.  A comparison of Theorem~\ref{thm:mainrs_small} with Theorem~\ref{thm:mainrs_large} is shown in Figure~\ref{fig:quantitative}.
\end{remark}
\begin{remark}[Large $r$]
	Notice that Theorem~\ref{thm:mainrs_small} allows $r$ to grow slightly with $n$.  However, since we have $t =\log_{|B|}(n)$ since $n = |\F|$, the requirement on $t$ implies that, for the result to hold, we need
	\[ \log_{|B|}\inparen{ \frac{n}{r\cdot(r+{r \choose 2}(|B|-1))\cdot (|B|-1) + 1} } > {r \choose 2} \]
	or $r \lesssim \sqrt{\log(n)}$.  
\end{remark}

The rest of this section is devoted to the proof of Theorem~\ref{thm:mainrs_small}.  We begin with a lemma which 
shows that, if the multipliers $\delta_1,\ldots,\delta_r$ are picked appropriately, then the matrix formed by contenating $r$ copies of the single-repair matrices of Proposition~\ref{prop:repairscheme} form a good multiple-repair matrix. 
\begin{lemma}\label{thm:sufficient}
	Let $n = |\F|$, and let $B \subseteq \F$ be a subfield so that $F$ has degree $t$ over $B$. 
	Let $\cC \subseteq \F^n$ be the Reed-Solomon code of dimension $k = n - n/|B|$ with evaluation points $\F = \{\alpha_1,\ldots,\alpha_n\}$.
	Suppose\footnote{Since we will never use anything about the ordering of the evaluation points, this assumption is without loss of generality.} that $I = \inset{ \alpha_1,\ldots,\alpha_r }$.  
	Let $\zeta_1,\ldots,\zeta_t$ be any basis for $\F$ over $B$.  
	Choose $\delta_1,\ldots, \delta_r$ so that for all $j = 1,\ldots, r$, for all $\ell > j$ and for all $s > j$, we have
	\begin{equation}\label{eq:needthis}
	\tr_{\F/B}\inparen{ \frac{ \delta_\ell }{\delta_j} \cdot \frac{ \alpha_s - \alpha_j }{ \alpha_j - \alpha_\ell }} = 0. 
	\end{equation}
	Let 
	\[ M_i := M\inparen{\delta_i, \frac{\zeta_1}{\delta_i}, \frac{\zeta_2}{\delta_i}, \ldots, \frac{ \zeta_t }{\delta_i } } \]
	be as in Proposition~\ref{prop:repairscheme}.
	Then the matrix $M \in \F^{n \times rt}$ given by
	\begin{equation}\label{eq:defM}
	M = [M_1 | M_2 | \cdots | M_r ]
	\end{equation}
	is a multiple-repair matrix for $I$.
\end{lemma}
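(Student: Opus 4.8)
The plan is to verify the two substantive requirements in Definition~\ref{def:megarepmat}: that every column of $M$ lies in $\cC^\perp$, and that $M[I,:]$ has full rank over $B$ (the third item of that definition merely \emph{names} the bandwidth, so there is nothing to check there). Throughout, the existence of multipliers $\delta_1,\ldots,\delta_r$ satisfying \eqref{eq:needthis}, together with the bandwidth estimate, is left to the proof of Theorem~\ref{thm:mainrs_small}; here we only use that such a choice has been fixed. The dual-codeword condition is immediate: for each $i$, the tuple $\zeta_1/\delta_i,\ldots,\zeta_t/\delta_i$ is again a $B$-basis of $\F$ (a basis rescaled by a nonzero field element), so $M_i$ is a legitimate single-failure repair matrix for the index $\alpha_i$ by Proposition~\ref{prop:repairscheme}; in particular its columns lie in $\cC^\perp$, and hence so do all columns of $M=[M_1\,|\,\cdots\,|\,M_r]$.

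For the full-rank condition I would first reduce it to a linear system over $B$. From the explicit formula in Proposition~\ref{prop:repairscheme}, the factors $\delta_i$ and $\delta_i^{-1}$ cancel on the repaired row, so $M_i[\alpha_i,:]=(\zeta_1,\ldots,\zeta_t)=:\vec{\zeta}$, while for $\ell\neq i$ we have $M_i[\alpha_\ell,w]=\tfrac{\delta_i}{\alpha_\ell-\alpha_i}\,\tr_{\F/B}\inparen{\tfrac{\zeta_w(\alpha_\ell-\alpha_i)}{\delta_i}}$. Suppose $x=(x^{(1)},\ldots,x^{(r)})\in B^{rt}$, with $x^{(i)}\in B^t$ indexing the $i$-th block of columns, satisfies $M[I,:]x=0$, and put $y_i:=\ip{\vec{\zeta}}{x^{(i)}}\in\F$. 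Since $\vec{\zeta}$ is a $B$-basis, $y_i=0$ if and only if $x^{(i)}=0$, so it is enough to show every $y_i$ vanishes. Using $B$-linearity of $\tr_{\F/B}$ to pull the $B$-coordinates $x^{(i)}_w$ inside the trace, the $\alpha_\ell$-th coordinate of $M[I,:]x=0$ reads
\[ y_\ell \;+\; \sum_{i\neq\ell}\frac{\delta_i}{\alpha_\ell-\alpha_i}\,\tr_{\F/B}\inparen{\frac{(\alpha_\ell-\alpha_i)\,y_i}{\delta_i}}\;=\;0,\qquad \ell=1,\ldots,r, \]
a system I will call $(\star)$; for $r=1$ this is exactly the single-failure full-rank statement of \cite{gw16}.

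It then remains to show that $(\star)$ forces $y_1=\cdots=y_r=0$, which I would prove by induction on $r$; this is where \eqref{eq:needthis} enters and is the crux of the whole argument. The base case $r=1$ is the single equation $y_1=0$. For $r\geq 2$, solve the $\ell=1$ equation of $(\star)$ for $y_1$ as a $B$-linear expression in $y_2,\ldots,y_r$, and substitute it into each equation $\ell=2,\ldots,r$. In the $\ell$-th such equation the contribution of the index $i=1$ unwinds — once more by trace-linearity, pulling the $B$-scalars $\tr_{\F/B}\inparen{\tfrac{(\alpha_1-\alpha_i)y_i}{\delta_i}}$ out of the outer trace — into a $B$-linear combination, over $i=2,\ldots,r$, of the quantities $\tr_{\F/B}\inparen{\tfrac{\delta_i}{\delta_1}\cdot\tfrac{\alpha_\ell-\alpha_1}{\alpha_1-\alpha_i}}$; and each of these vanishes by \eqref{eq:needthis}: it is the instance with $j=1$, the remaining two indices there (both required to exceed $j$) being our summation index $i$ and our equation index, which are both at least $2$. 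Hence the $i=1$ contribution disappears entirely, and equations $\ell=2,\ldots,r$ become verbatim the system $(\star)$ for the $r-1$ failed nodes $\{\alpha_2,\ldots,\alpha_r\}$ with multipliers $\delta_2,\ldots,\delta_r$, which still satisfy the instances of \eqref{eq:needthis} relevant to them (those with $j\geq 2$). By the inductive hypothesis $y_2=\cdots=y_r=0$, and then the $\ell=1$ equation gives $y_1=0$, so $x=0$ and $M[I,:]$ has full rank. I expect this last step to be the only real obstacle — recognizing that eliminating $y_1$ collapses the cross-terms through \eqref{eq:needthis} and reproduces $(\star)$ with one fewer failed node, so that the induction closes; the reduction to $(\star)$ and the dual-codeword property are routine.
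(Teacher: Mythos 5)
Your proposal is correct, and it is essentially the paper's argument viewed from the kernel side: the paper runs an explicit LU-style elimination on $A=M[I,:]$ using row operations built from $f_\gamma(x)=\gamma\,\tr_{\F/B}(x/\gamma)$ to bring $A$ to block upper-triangular form with $\vec{\zeta}$ on the diagonal blocks, whereas you eliminate the unknowns $y_i=\ip{\vec{\zeta}}{x^{(i)}}$ from the kernel equations directly by induction on $r$. In both cases the crux is the identical nested-trace computation in which \eqref{eq:needthis} kills the cross terms $\tr_{\F/B}\bigl(\tfrac{\delta_\ell}{\delta_j}\cdot\tfrac{\alpha_s-\alpha_j}{\alpha_j-\alpha_\ell}\bigr)$, so your induction closes for exactly the same reason the paper's elimination triangularizes.
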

Notice that Lemma~\ref{thm:sufficient} does not make any claims about the bandwidth of this scheme; we will show below how to choose the $\delta_i$ so that \eqref{eq:needthis} holds, and so that the bandwidth is also small.
Because the columns of $M$ are columns of the $M_i$ and we have already established that these are dual codewords, the only thing left to prove is that Property~\ref{prop:fullrank} of Definition~\ref{def:megarepmat} holds; that is, that the $r \times rt$ matrix $M[I,:]$ has a trivial right kernel over $B$.  We will prove this in Section~\ref{sec:pfsuff}, but first, we will show how to use Lemma~\ref{thm:sufficient} in order to prove Theorem~\ref{thm:mainrs_small}.

\begin{proof}[Proof of Theorem~\ref{thm:mainrs_small}]
	Suppose without loss of generality that $I = \{\alpha_1,\ldots,\alpha_r\}$.
	Let $\zeta_1,\ldots, \zeta_t$ be any basis of $\F$ over $B$.
	We will choose the paramaters $\delta_1,\ldots, \delta_r$ successively
	so that Lemma~\ref{thm:sufficient} applies, and keep track of the bandwidth of the resulting repair matrix.
	
	Before we begin, we note that this approach---even without keeping track of the bandwidth---would immediately
	imply that $M$ is a multiple-repair matrix for $I$ with bandwidth at most $(n-r)\cdot r$; indeed, for all $i \in [n]\setminus I$,
	\[ \dim_B \inset{ \spn_B \inset{ \st{M[i,:]} } } \leq r, \]
	because $\st{M[i,:]} = \bigcup_{\ell=1}^r \st{M_\ell[i,:]}$,
	and for each $\ell$ we have
	\[ \dim_B \inset{ \spn_B \inset{ \st{M_\ell[i,:]}} } \leq 1. \]
	Then, Theorem~\ref{thm:main} implies that this gives a repair scheme for $I$ with bandwidth $b \leq (n-r)r$. 
	
	The approach above would recover the results of \cite{multfail} for $r=2,3$ and would generalize them to all $r$.
	However, in fact this calculation may be wasteful, and by choosing the $\delta_i$ carefully we can improve the bandwidth.
	More precisely, 
	we will try to choose $\delta_i$ so that the spans $\spn_B \inset{ \st{M[i,:]}}$ collide, and the dimension of the union is less than the sum of the dimensions.
	
	\begin{claim}\label{claim:FullRankTop}
		Let $\ell < r$ and suppose that $\delta_1,\ldots, \delta_{\ell-1}$ have been chosen.  Then for at least $|B|^{ t - (\ell-1) \cdot r + \ell(\ell-1)/2} - 1$ choices of $\delta \in \F^*$, 
		setting $\delta_{\ell} \gets \delta$ satisfies 
		\[ \tr_{\F/B} \inparen{  \frac{ \delta }{\delta_j } \cdot \frac{ \alpha_s - \alpha_j }{ \alpha_j - \alpha_\ell } } = 0 \]
		for all $j < \ell$ and all $s > j$.
	\end{claim}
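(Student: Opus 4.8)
The plan is to treat each of the identities required by the claim as a single $B$-linear constraint on the unknown $\delta$, and then to compare the number of constraints against $\dim_B \F = t$.

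First I would fix $\ell < r$ and the previously chosen values $\delta_1,\dots,\delta_{\ell-1}\in\F^*$, and for each pair $(j,s)$ with $1\leq j\leq \ell-1$ and $j<s\leq r$ introduce the scalar
\[ c_{j,s} \;:=\; \frac{1}{\delta_j}\cdot\frac{\alpha_s-\alpha_j}{\alpha_j-\alpha_\ell}\ \in\ \F. \]
This is well defined: the evaluation points are distinct, so $\alpha_j\neq\alpha_\ell$ (as $j\neq\ell$), and $\delta_j\neq 0$ by the inductive choice; moreover $c_{j,s}\neq 0$ since $\alpha_s\neq\alpha_j$. The condition the claim asks us to enforce is then exactly $\tr_{\F/B}(\delta\cdot c_{j,s})=0$. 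Because $\tr_{\F/B}$ is $B$-linear and multiplication by the fixed element $c_{j,s}$ is $\F$-linear (hence $B$-linear), the map $\delta\mapsto\tr_{\F/B}(\delta\cdot c_{j,s})$ is a $B$-linear functional on $\F$, viewed as a $t$-dimensional vector space over $B$.

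Next I would count the constraints. For each $j\in\{1,\dots,\ell-1\}$ there are exactly $r-j$ admissible values of $s$, namely $s\in\{j+1,\dots,r\}$, so the total number of these linear functionals is
\[ N \;=\; \sum_{j=1}^{\ell-1}(r-j) \;=\; (\ell-1)\,r-\frac{\ell(\ell-1)}{2}. \]
The set of $\delta\in\F$ killed by all of them is the intersection of the kernels of these $N$ $B$-linear functionals, hence a $B$-subspace $V\subseteq\F$ with $\dim_B V\geq t-N = t-(\ell-1)r+\ell(\ell-1)/2$. Since $|V|=|B|^{\dim_B V}\geq |B|^{\,t-(\ell-1)r+\ell(\ell-1)/2}$ and $0\in V$, there are at least $|B|^{\,t-(\ell-1)r+\ell(\ell-1)/2}-1$ nonzero elements of $V$, i.e.\ elements of $\F^*$; for any such $\delta$, setting $\delta_\ell\gets\delta$ makes every trace in the claim vanish.

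I do not expect a genuine obstacle: the argument is essentially a dimension count. The only points requiring care are checking that each $c_{j,s}$ is a well-defined nonzero element of $\F$ (so that the functionals and the bookkeeping are legitimate), and computing the number $N$ of pairs $(j,s)$ exactly, since it is precisely this count that produces the stated exponent $t-(\ell-1)r+\ell(\ell-1)/2$.
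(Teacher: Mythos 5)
Your proof is correct and follows exactly the paper's argument: each required identity is a $B$-linear constraint on $\delta$, there are $\sum_{j=1}^{\ell-1}(r-j) = (\ell-1)r - \ell(\ell-1)/2$ of them, and a dimension count over $B$ in the $t$-dimensional space $\F$ gives the stated number of nonzero solutions. The only difference is that you spell out the well-definedness of the coefficients and the kernel-intersection step, which the paper leaves implicit.
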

	\begin{proof}
		For each $j < \ell$ and $s > j$, the above gives a linear requirement on $\delta_j$.  There are at most
		\[ \sum_{j=1}^{\ell-1} (r - j) = (\ell-1) \cdot r - \frac{ \ell(\ell - 1) }{2} \]
		such pairs $(j,s)$, so there are that many linear constraints.  Since $\F$ is a vector space over $B$ of dimension $t$, this proves the claim.
	\end{proof}

	We briefly recall some algebra.  For $\gamma \in \F^*$, the \em (multiplicative) coset \em $\gamma \cdot B^*$ is the set
	\[ \gamma \cdot B^* = \inset{ \gamma \cdot b \suchthat b \in B^* }. \]
	We say that $\gamma \equiv_{B^*} \gamma'$ if $\gamma \in \gamma'B^*$, and it is not hard to see that $\equiv_{B^*}$ is an equivalence relation that partitions $\F^*$ into $|\F^*|/|B^*|$ cosets of size $|B^*|$.  The group of all such cosets form the \em quotient group \em $\F^*/B^*$.
	The following observations follow directly from these definitions, as well as the definition of the matrix $M_\ell$.
	\begin{observation}\label{obs:goal}
		Suppose that $\st{M[i,:]} \subseteq \gamma_1 B^* \cup \gamma_2 B^* \cup \cdots \cup \gamma_c B^*$ for $c$ different cosets $\gamma_i B^*$.  
		Then 
		\[ \dim_B\inset{ \spn_B \inset{ \st{M[i,:]} } } \leq c. \]
	\end{observation}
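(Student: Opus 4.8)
The plan is to reduce everything to the single observation that a multiplicative coset of $B^*$ sits inside a one-dimensional $B$-subspace of $\F$, and then use monotonicity of $\spn_B$.

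First I would record the elementary fact that for any $\gamma \in \F^*$ we have $\gamma B^* \subseteq \gamma B = \spn_B\inset{\gamma}$: multiplying $\gamma$ by an element of $B^* \subseteq B$ keeps us on the $B$-line through $\gamma$, and this line is exactly one-dimensional over $B$ since $\gamma \neq 0$. Applying this to each $\gamma_\ell$, the hypothesis $\st{M[i,:]} \subseteq \bigcup_{\ell=1}^{c} \gamma_\ell B^*$ immediately gives
\[ \st{M[i,:]} \subseteq \bigcup_{\ell=1}^{c} \spn_B\inset{\gamma_\ell} \subseteq \spn_B\inset{\gamma_1,\ldots,\gamma_c}. \]
Since $\spn_B$ is monotone and idempotent, taking spans of both sides yields $\spn_B\inset{ \st{M[i,:]} } \subseteq \spn_B\inset{\gamma_1,\ldots,\gamma_c}$, and the latter has $B$-dimension at most $c$ because it is spanned by $c$ elements. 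This is exactly the desired bound $\dim_B\inset{ \spn_B\inset{ \st{M[i,:]} } } \leq c$.

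I do not expect any real obstacle here — the statement is essentially a repackaging of the definition of a coset. The only point to be slightly careful about is the distinction between the \emph{multiplicative} coset $\gamma_\ell B^*$ appearing in the hypothesis and the \emph{additive} line $\gamma_\ell B = \spn_B\inset{\gamma_\ell}$ used in the argument; the inclusion $B^* \subseteq B$ is what bridges the two. (In particular the hypothesis forces $0 \notin \st{M[i,:]}$, since $0$ lies in no multiplicative coset, but this plays no role in the dimension count.) One could alternatively phrase the proof via the quotient group $\F^*/B^*$: the elements of $\st{M[i,:]}$ map into $c$ classes of $\F^*/B^*$, and for each class all representatives are $B$-scalar multiples of one another, so the preimages span a subspace of dimension at most $c$; but the direct argument above is cleaner.
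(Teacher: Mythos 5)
Your proof is correct and is exactly the direct argument the paper has in mind (the paper gives no explicit proof, stating that the observation ``follows directly from these definitions''): each multiplicative coset $\gamma_\ell B^*$ lies in the one-dimensional $B$-subspace $\spn_B\inset{\gamma_\ell}$, so $\st{M[i,:]}$ lies in $\spn_B\inset{\gamma_1,\ldots,\gamma_c}$, which has dimension at most $c$. Your parenthetical remarks about $0$ and about $B^* \subseteq B$ bridging the multiplicative and additive structures are accurate and do not affect the conclusion.
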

	\begin{observation}\label{obs:collide}
		We have
		\[ \spn_B \inset{ \st{M_\ell[i,:]} } \subseteq \inparen{ \frac{ \delta_\ell }{ \alpha_i - \alpha_\ell }} B^* \cup \inset{0}. \]
	\end{observation}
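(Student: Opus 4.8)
The plan is to simply unwind the definition of $M_\ell$ and exploit the single fact that the field trace $\tr_{\F/B}$ takes values in $B$. Recall from Lemma~\ref{thm:sufficient} that $M_\ell = M\inparen{\delta_\ell, \zeta_1/\delta_\ell, \ldots, \zeta_t/\delta_\ell}$, so that Proposition~\ref{prop:repairscheme} gives, for any row index $i$ with $\alpha_i \neq \alpha_\ell$ (the relevant case, since the observation will be applied for $i \notin I$) and any column $w \in [t]$,
\[ M_\ell[i,w] = \frac{ \delta_\ell \cdot \tr_{\F/B}\inparen{ (\zeta_w/\delta_\ell)(\alpha_i - \alpha_\ell) } }{ \alpha_i - \alpha_\ell }. \]

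The key point is that $\tr_{\F/B}\inparen{ (\zeta_w/\delta_\ell)(\alpha_i - \alpha_\ell) }$ is an element $b_w$ of the base field $B$, so that $M_\ell[i,w] = b_w \cdot \frac{\delta_\ell}{\alpha_i - \alpha_\ell}$. Hence every element of $\st{M_\ell[i,:]}$ lies in the one-dimensional $B$-subspace $\frac{\delta_\ell}{\alpha_i - \alpha_\ell} \cdot B$. Since this set is already a $B$-subspace (it is the line $B$ scaled by the nonzero element $\delta_\ell/(\alpha_i - \alpha_\ell)$), taking its $B$-span does not enlarge it, and we conclude
\[ \spn_B\inset{ \st{M_\ell[i,:]} } \subseteq \frac{\delta_\ell}{\alpha_i - \alpha_\ell} \cdot B = \inparen{ \frac{\delta_\ell}{\alpha_i - \alpha_\ell} B^* } \cup \inset{0}, \]
where the last equality uses $B = B^* \cup \inset{0}$ together with $\delta_\ell \in \F^*$ and $\alpha_i \neq \alpha_\ell$, which ensure $\delta_\ell/(\alpha_i - \alpha_\ell) \neq 0$. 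This is precisely the asserted containment.

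There is essentially no obstacle: the statement drops out of the construction, exactly as the authors indicate. The only point meriting a word of care is the implicit restriction to row indices $i \notin I$ (equivalently $\alpha_i \neq \alpha_\ell$), without which the scalar $\delta_\ell/(\alpha_i - \alpha_\ell)$ is undefined; this restriction is consistent with how Observation~\ref{obs:collide} is subsequently combined with Observation~\ref{obs:goal} to bound the bandwidth, where only the rows outside $I$ contribute.
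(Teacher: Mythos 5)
Your proof is correct and follows exactly the route the paper intends: the paper declares Observation~\ref{obs:collide} to ``follow directly from the definitions,'' and the computation you give is precisely the one already spelled out after Proposition~\ref{prop:repairscheme} (where $h_w(\alpha_j) \in \frac{\delta}{\alpha_j - \alpha_i}\cdot B$ is derived from the trace taking values in $B$), specialized to $M_\ell = M(\delta_\ell, \zeta_1/\delta_\ell,\ldots,\zeta_t/\delta_\ell)$. Your remark that the containment only makes sense for rows $i \notin I$ is a fair and accurate reading of how the observation is used.
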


Say that $(j,\ell)$ \textbf{collide} at $i$ if \[\frac{ \delta_j }{ \alpha_i - \alpha_j }B^* = \frac{ \delta_{\ell} }{ \alpha_i - \alpha_{\ell} }B^*.\]
Notice that $(j,\ell)$ collide at $i$ if and only if 
\[ \delta_{\ell } \in \inparen{\frac{\alpha_{i}-\alpha_{\ell}}{\alpha_i-\alpha_j}}\delta_j B^*.\]
Define
\[ S_{\ell} := \{i \in [n]\setminus I\,:\, \exists j<k\leq \ell \text{ s.t. } (j,k) \text{ collide at } i \}.\]

Consider choosing $\delta_1,...,\delta_r$ one at a time.  Choose $\delta_1 = 1$, and 
we will proceed by induction on $\ell$, assuming that we have chosen $\delta_1,\ldots, \delta_\ell$, with the inductive hypothesis that
\begin{equation}\label{eq:hyp}
|S_{\ell}| = \binom{\ell}{2} \cdot (|B|-1).
\end{equation}
Notice that the base case when $\ell=1$ is trivially satisfied with $\delta_1 = 1$, because $S_1 = \emptyset$.
Now suppose that the inductive hypothesis \eqref{eq:hyp} holds, and consider choosing $\delta_{\ell + 1}$.
For $j < \ell + 1$, define
\[ T_j^{\ell + 1} := \inset{ i \in [n] \,:\, (\ell+1,j) \text{ collide at } i }. \]
Notice that $T_j^{\ell+1}$ depends on the choice of $\delta_{\ell + 1}$, as the definition of colliding depends on $\delta_{\ell+1}$.

\begin{claim}\label{SmallSchemeClaim}
	Fix $j < \ell + 1$. For all but at most $r + \binom{\ell}{2}\cdot (|B|-1)$ cosets $\gamma B^*$, the following holds: 
	If $\delta_{\ell + 1} \in \gamma B^*$, then 	
	\begin{itemize}
		\item[(a)] $|T^{\ell+1}_j| = |B| - 1$, and 
		\item[(b)] $T^{\ell+1}_j \cap (I \cup S_\ell) = \emptyset.$
	\end{itemize}
\end{claim}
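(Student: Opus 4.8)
The plan is to reduce Claim~\ref{SmallSchemeClaim} to an elementary fiber-counting statement for a Möbius-type map on the evaluation points, and then to bound the ``bad set'' of cosets as the union of two small pieces, one coming from part (a) and one from part (b). Fix $j<\ell+1$. As recorded in the text, $(\ell+1,j)$ collide at an index $i$ exactly when $\delta_{\ell+1}B^* = \left(\frac{\alpha_i-\alpha_{\ell+1}}{\alpha_i-\alpha_j}\right)\delta_j B^*$, and this condition is only meaningful for $i\in[n]\setminus\{j,\ell+1\}$ (otherwise a denominator vanishes). So, writing $g(i):=\left(\frac{\alpha_i-\alpha_{\ell+1}}{\alpha_i-\alpha_j}\right)\delta_j B^*$ for $i\in[n]\setminus\{j,\ell+1\}$, we have $T^{\ell+1}_j = g^{-1}(\delta_{\ell+1}B^*)$, and the whole claim becomes a statement about which cosets occur, and with what multiplicity, among the values $g(i)$.

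The first step is to understand the map $h:i\mapsto\frac{\alpha_i-\alpha_{\ell+1}}{\alpha_i-\alpha_j}$ on $[n]\setminus\{j,\ell+1\}$. Since the evaluation points are all of $\F$, $h$ is the restriction of the Möbius transformation $x\mapsto\frac{x-\alpha_{\ell+1}}{x-\alpha_j}$ of $\F\cup\{\infty\}$, which is a bijection because $\alpha_{\ell+1}\ne\alpha_j$ and which sends $\alpha_{\ell+1}\mapsto 0$, $\alpha_j\mapsto\infty$, $\infty\mapsto 1$; hence $h$ is a bijection from $[n]\setminus\{j,\ell+1\}$ onto $\F\setminus\{0,1\}=\F^*\setminus\{1\}$. (This can also be checked by hand: clearing denominators in $h(i)=h(i')$ forces $(\alpha_{\ell+1}-\alpha_j)\alpha_i=(\alpha_{\ell+1}-\alpha_j)\alpha_{i'}$, hence $i=i'$; and $h(i)\ne 0$ since $\alpha_i\ne\alpha_{\ell+1}$, while $h(i)\ne 1$ since $\alpha_{\ell+1}\ne\alpha_j$, so the injective map $h$ must land onto all of $\F^*\setminus\{1\}$ for cardinality reasons.) Multiplying by $\delta_j\in\F^*$, we conclude that $i\mapsto h(i)\delta_j$ is a bijection from $[n]\setminus\{j,\ell+1\}$ onto $\F^*\setminus\{\delta_j\}$.

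Part (a) is then immediate: $|T^{\ell+1}_j|$ equals the number of elements of $\F^*\setminus\{\delta_j\}$ lying in the coset $\delta_{\ell+1}B^*$, i.e.\ $|\delta_{\ell+1}B^*\setminus\{\delta_j\}|$. A coset has exactly $|B|-1$ elements, and it contains $\delta_j$ precisely when $\delta_{\ell+1}\in\delta_j B^*$; so $|T^{\ell+1}_j|=|B|-1$ unless $\delta_{\ell+1}$ lies in the single coset $\delta_j B^*$. For part (b), $T^{\ell+1}_j\cap(I\cup S_\ell)\ne\emptyset$ can only happen if $\delta_{\ell+1}B^*=h(i)\delta_j B^*$ for some $i\in(I\cup S_\ell)\setminus\{j,\ell+1\}$, i.e.\ if $\delta_{\ell+1}B^*$ coincides with one of at most $|(I\cup S_\ell)\setminus\{j,\ell+1\}|$ prescribed cosets. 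Since $j,\ell+1\in I$, and $I$ is disjoint from $S_\ell$ with $|I|=r$ and $|S_\ell|=\binom{\ell}{2}(|B|-1)$ by the inductive hypothesis \eqref{eq:hyp}, this is at most $r+\binom{\ell}{2}(|B|-1)-2$ cosets. Adding the one coset excluded by part (a) shows that at most $r+\binom{\ell}{2}(|B|-1)-1\le r+\binom{\ell}{2}(|B|-1)$ cosets are bad, and for every other coset $\gamma B^*$, choosing $\delta_{\ell+1}\in\gamma B^*$ makes both (a) and (b) hold.

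I do not anticipate a genuine obstacle here; the one point that needs care is pinning the image of $h$ down to \emph{exactly} $\F^*\setminus\{1\}$, rather than to an arbitrary size-$(n-2)$ subset of $\F^*$, since the fiber-size computation in part (a)---including the degenerate case $|B|=2$, where cosets are singletons---relies on knowing that the single omitted value $1$ is attained by no admissible index. Everything else is routine bookkeeping of coset counts. (This claim is of course only one ingredient: the subsequent steps of the proof of Theorem~\ref{thm:mainrs_small} must further arrange that the sets $T^{\ell+1}_1,\dots,T^{\ell+1}_\ell$ are mutually disjoint and compatible with condition \eqref{eq:needthis}, but that is outside the scope of this claim.)
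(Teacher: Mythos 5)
Your proof is correct and follows essentially the same route as the paper's: both arguments rest on the bijectivity of the map $z \mapsto \bigl(\tfrac{z-\alpha_{\ell+1}}{z-\alpha_j}\bigr)\delta_j$ and bound the bad cosets by those hit by $I \cup S_\ell$. In fact you are slightly more careful than the paper at the exceptional points of this M\"obius map: you note that its image on the admissible indices is exactly $\F^*\setminus\{\delta_j\}$ and therefore explicitly exclude the one coset $\delta_j B^*$ (on which $|T^{\ell+1}_j|$ would drop to $|B|-2$), a case the paper's ``good'' condition does not by itself rule out; your final count still fits within the stated bound, so this only tightens the argument.
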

\begin{proof}
	Say $\gamma$ is \textbf{good} if 
	\[ \inparen{\frac{\alpha_i-\alpha_{\ell+1}}{\alpha_{i}-\alpha_{j}}}\delta_j \notin \gamma B^* \qquad \forall i \in I \cup S_{\ell}. \]
	That is, $\gamma$ is good if and only if choosing $\delta_{\ell+1} \in \gamma B^*$ would mean that $(\ell + 1, j)$ do \em not \em collide at any $i \in I \cup S_\ell$.
	We first claim that there are at most $|I \cup S_\ell|$ values of $\gamma \in F$ that are not good.  This follows from the fact that the map $h: z \mapsto \inparen{ \frac{ z - \alpha_{\ell+1} }{z - \alpha_j} } \delta_j $
	is a bijection from $F$ to $F$.
	Given this, and the fact that $|I \cup S_\ell| = |I| + |S_\ell| = r + {\ell \choose 2} \cdot (|B| -1)$, we conclude that there are at most $r + {\ell \choose 2} \cdot(|B| + 1)$ choices of $\gamma$ that are not good. 
	
	Suppose that $\gamma$ is good, and consider any choice of $\delta_{\ell + 1} \in \gamma B^*$.  
	Since $h$ as defined above is a bijection, there are $|B^*| = |B| - 1$ elements $\alpha_i \in F$ so that 
	\[ \inparen{ \frac{\alpha_i - \alpha_{\ell+1}}{\alpha_i - \alpha_j} } \delta_j \in \gamma B^* = \delta_{\ell + 1} B^*, \]
	and so $(\ell+1,j)$ collide at $i$, and by definition $T_j^{\ell+1}$ is this set.  This establishes (a).  Now (b) follows from the definition of good.
	This completes the proof of Claim~\ref{SmallSchemeClaim}.
\end{proof}

Claim~\ref{SmallSchemeClaim} is for a fixed $j$, and summing over all $j \leq \ell$, it implies that for all but at most $\ell \cdot (r + {\ell \choose 2}(|B|-1))\cdot(|B|-1)$ choices of $\delta_{\ell + 1}$, we have that for \em all \em  $j = 1,\ldots, \ell$, the set $T_j^{\ell+1}$ satisfies the conclusions (a) and (b) of Claim \ref{SmallSchemeClaim}.  We next claim that for such a choice of $\delta_{\ell+1}$, the sets $T_j^{\ell+1}$ are disjoint. 

\begin{claim}\label{claim:SmallSchemeDisjoint}
	Suppose that $\delta_{\ell+1}$ is chosen so that the conclusion (b) of Claim~\ref{SmallSchemeClaim} holds for all $j = 1,\ldots, \ell$.  
	Then for all $j < k \leq \ell$, $T^{\ell+1}_j \cap T^{\ell+1}_k = \emptyset.$
\end{claim}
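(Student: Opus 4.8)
The plan is to argue by contradiction: suppose that for some $j < k \leq \ell$ there is an index $i \in T_j^{\ell+1} \cap T_k^{\ell+1}$. By the definition of colliding, $i \in T_j^{\ell+1}$ means $(\ell+1,j)$ collide at $i$, i.e. $\frac{\delta_{\ell+1}}{\alpha_i - \alpha_{\ell+1}} B^* = \frac{\delta_j}{\alpha_i - \alpha_j} B^*$, and similarly $i \in T_k^{\ell+1}$ means $\frac{\delta_{\ell+1}}{\alpha_i - \alpha_{\ell+1}} B^* = \frac{\delta_k}{\alpha_i - \alpha_k} B^*$. Chaining these two equalities of cosets eliminates $\delta_{\ell+1}$ entirely and yields $\frac{\delta_j}{\alpha_i - \alpha_j} B^* = \frac{\delta_k}{\alpha_i - \alpha_k} B^*$, which says precisely that $(j,k)$ collide at $i$. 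Since $j < k \leq \ell$, this puts $i \in S_\ell$ by the definition of $S_\ell$.

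On the other hand, $i \in T_j^{\ell+1}$, and by hypothesis $\delta_{\ell+1}$ was chosen so that conclusion~(b) of Claim~\ref{SmallSchemeClaim} holds for every $j = 1,\ldots,\ell$; in particular $T_j^{\ell+1} \cap (I \cup S_\ell) = \emptyset$, so $i \notin S_\ell$. This is a contradiction, so no such $i$ exists and the sets $T_j^{\ell+1}$, $j=1,\ldots,\ell$, are pairwise disjoint.

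The main (and only real) step is the coset manipulation that shows a common element of $T_j^{\ell+1}$ and $T_k^{\ell+1}$ forces $(j,k)$ to collide at that index; everything else is bookkeeping against the already-established conclusion~(b). I expect no genuine obstacle here — the argument is essentially transitivity of the equivalence relation $\equiv_{B^*}$ applied to the two collision identities, combined with the disjointness from $S_\ell$ guaranteed by the choice of $\delta_{\ell+1}$. The only thing to be slightly careful about is that the elements $\frac{\delta_{\ell+1}}{\alpha_i - \alpha_{\ell+1}}$ etc. are genuinely in $\F^*$ (which holds since $i \notin I$, so all the denominators are nonzero, and the $\delta$'s are units), so that the coset statements make sense.
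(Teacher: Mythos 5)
Your proof is correct and is essentially identical to the paper's own argument: chain the two coset identities to eliminate $\delta_{\ell+1}$, conclude that $(j,k)$ collide at $i$ so $i \in S_\ell$, and contradict conclusion (b) of Claim~\ref{SmallSchemeClaim}. No further changes needed.
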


\begin{proof}
	Suppose towards a contradiction that $i \in T_j^{\ell+1} \cap T_k^{\ell + 1}$.
	Then by the definition of $T_j^{\ell+1}$ and $T_k^{\ell+1}$, 
	$(\ell+1,j)$ and $(\ell+1,k)$ collide at $i$, which means that
	\[ \frac{\delta_j}{\alpha_i - \alpha_j} B^* = \frac{\delta_{\ell+1}}{\alpha_i - \alpha_{\ell + 1}} B^* = \frac{ \delta_k }{ \alpha_i - \alpha_k } B^*. \]
	This implies that $j$ and $k$ collide at $i$.  However, since $j, k \leq \ell$, this implies that $i \in S_\ell$. 
	But this contradicts the conclusion (b) of Claim~\ref{SmallSchemeClaim}, which states that $S_\ell \cap T^{\ell+1}_j = \emptyset$. 
\end{proof}

Now we finish the proof of Theorem~\ref{thm:mainrs_small}. 
Suppose that $\delta_{\ell+1}$ is chosen so that the conclusions (a) and (b) of Claim~\ref{SmallSchemeClaim} as well as the conditions of \eqref{eq:needthis} hold for $j = 1, \ldots, \ell$.
By definition we have
\[ S_{\ell + 1} = S_\ell \cup T_1^{\ell+1} \cup T_2^{\ell+1} \cup \cdots \cup T_\ell^{\ell+1}, \]
and by conclusion (b) of Claim~\ref{SmallSchemeClaim} and by Claim~\ref{claim:SmallSchemeDisjoint}, each of these sets of disjoint.
By induction, $|S_\ell| = {\ell \choose 2}(|B| - 1)$ and by the conclusion (a) of Claim~\ref{SmallSchemeClaim} we have $|T_j^{\ell+1}| = |B|-1$ for all $j \leq \ell$.  Thus, 
\begin{align*}
|S_{\ell+1}| &= |S_\ell| + \sum_{j=1}^\ell |T_j^{\ell+1}| \\
&= {\ell \choose 2}(|B| - 1) + \ell (|B| - 1) \\
&= { \ell + 1 \choose 2 } (|B| - 1).
\end{align*}
This establishes the inductive hypothesis for $\ell + 1$.  

Now, the requirement that
\[t > {r \choose 2} + \log_{|B|}\inparen{r\cdot\inparen{r+{r \choose 2}(|B|-1)}\cdot (|B|-1) + 1}\]
or equivalently
\[ |B|^{\left(t- {r \choose 2}\right)} - 1> r \cdot \inparen{ r + {r \choose 2}(|B| -1 ) } \cdot (|B| - 1) \]
in the statement of the Theorem~\ref{thm:mainrs_small}, in conjunction with Claim~\ref{claim:FullRankTop} 
implies that for all $\ell < r$, we can choose a $\delta_{\ell+1}$ so that the conclusions (a) and (b) of Claim~\ref{SmallSchemeClaim} as well as the conditions of \eqref{eq:needthis} hold. 
Now by choosing $\delta_1,\ldots,\delta_r$ in this way, 
we conclude by induction that
\[ |S_r| = {r \choose 2}(|B| - 1). \]
Since $S_r$ is the set of rows $i$ that have any collisions, each $i \in S_r$ contributes at most $r-1$ to the bandwidth, while $i \in [n] \setminus (I \cup S_r)$ may contribute $r$.  Thus, with this choice of $\delta_1,\ldots, \delta_r$, the bandwidth of the resulting scheme is at most
\[ {r \choose 2}(|B| - 1) (r-1) + \inparen{n - r - {r \choose 2}(|B|-1)}\cdot r = (n-r)\cdot r - {r \choose 2}(|B| - 1). \]
\end{proof}

\subsubsection{Proof of Lemma~\ref{thm:sufficient}}\label{sec:pfsuff}
In this Section, we prove Lemma~\ref{thm:sufficient}.  
Because we have made no claims about the bandwidth, we only need to show that the sub-matrix $M[I,:]$ has full rank, in the sense that for all nonzero $y \in B^{rt}$, $M[I,:]y \neq 0$.  
To save on notation, 
for the rest of this proof, let $A \in \F^{r \times rt}$ denote the matrix $M[I,:]$.

As in the proof of Theorem~\ref{thm:main}, it suffices to show that the $B$-linear map $\vphi_A: \F^r \to B^{rt}$ given by
\[ \vphi_A(x) = \tr_{\F/B}( x^T A ) \]
is invertible, where $\tr_{\F/B}$ is applied coordinate-wise.
Our proof basically follows from analyzing the $LU$-decomposition of this matrix $A$.  
That is, we will give an algorithm, consisting of row operations, which preserve the invertibility of $\vphi_M$ and after which $M$ will become a matrix of the form 
\begin{equation}\label{eq:form}
M' = \begin{bmatrix} \vec{\zeta} & * & * & \cdots & * \\ \vec{0} & \vec{\zeta} & * & \cdots & * \\
\vec{0} & \vec{0} & \vec{\zeta} & \cdots & * \\
& \vdots \\
\vec{0} & \vec{0} & \vec{0} & \cdots & \vec{\zeta} 
\end{bmatrix}, 
\end{equation}
where $\vec{0} \in \F^t$ denotes the vector of $t$ zeros and $\vec{\zeta} = (\zeta_1,\zeta_2, \ldots, \zeta_t)$.  Recall that $\zeta_1,\ldots, \zeta_t$ is the basis chosen in the statement of the lemma.
The matrix $M'$ as in \eqref{eq:form} clearly has the desired property ($\vphi_{M'}$ is invertible), and so this will finish the proof. 

\begin{remark}[Picking a basis]
	In what follows, we just go through the LU decomposition of the matrix $A$, and show that the assumption \eqref{eq:needthis} implies that the result has the form \eqref{eq:form}.  Unfortunately, this familiar argument may seem less familiar because we have not picked a basis for $\F$ over $B$, and are instead working with the trace functional.  If the reader prefers, they may imagine picking a basis for $\F$ over $B$, and working with square $rt \times rt$ matrices over $B$.  However, for this exposition we choose to leave things as they are to avoid the additional level of subscripts that picking a basis would require.
\end{remark}

Given $\gamma \in \F$, define the map $f_\gamma: \F \to \F$ by
\[ f_{\gamma}( x ) = \gamma \cdot \tr_{\F/B}\inparen{ \frac{ x}{\gamma} } . \]
Then extend $f_\gamma$ to $f_\gamma :\F^{rt} \to \F^{rt}$ by acting coordinate-wise.
\begin{observation}\label{obs:rowop}
	Let $A \in \F^{r \times rt}$, and
	suppose that $A'$ is obtained from $A$ by adding $f_\gamma(A[i,:])$ to $A[j,:]$, so 
	\[ A'[j,:] = A[j,:] + f_\gamma(A[i,:]) \]
	and for $\ell \neq j$, we have $A'[\ell,:] = A[\ell,:]$.
	Then $\inset{ \tr_{\F/B}(x^T A) \suchthat x \in \F^r } = \inset{ \tr_{\F/B}(x^T A') \suchthat x \in \F^r}$.
	In particular, $\vphi_A$ is invertible if and only if $\vphi_{A'}$ is invertible.
	(As before, above $\tr_{\F/B}$ is applied coordiatewise).
\end{observation}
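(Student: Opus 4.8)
The plan is to read Observation~\ref{obs:rowop} as saying that the ``twisted row operation'' $A \mapsto A'$ amounts, at the level of the functional $\vphi_A$, to precomposition with an invertible $B$-linear change of variables on $\F^r$. Concretely, I will produce a $B$-linear bijection $\sigma \colon \F^r \to \F^r$ with $\vphi_{A'} = \vphi_A \circ \sigma$. This at once yields the equality of images $\inset{ \tr_{\F/B}(x^T A) \suchthat x \in \F^r } = \inset{ \tr_{\F/B}(x^T A') \suchthat x \in \F^r }$, and, since $\vphi_A$ is a $B$-linear map between two $B$-spaces of the same dimension $rt$, the equivalence of invertibility (invertible $\iff$ surjective $\iff$ image is all of $B^{rt}$). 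Throughout I take $\gamma \in \F^*$ and $i \neq j$, as is the case wherever the operation is applied.

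The computational heart is one identity: for any $y \in \F$ and any $x_j \in \F$,
\[ \tr_{\F/B}\inparen{ x_j \cdot f_\gamma(y) } = \tr_{\F/B}\inparen{ \frac{ \tr_{\F/B}(x_j \gamma) }{ \gamma } \cdot y }. \]
To see this, expand $f_\gamma(y) = \gamma \cdot \tr_{\F/B}(y/\gamma)$; since $u := \tr_{\F/B}(y/\gamma)$ lies in $B$, it can be pulled out of the outer trace by $B$-linearity, leaving $u \cdot \tr_{\F/B}(x_j\gamma)$; then $v := \tr_{\F/B}(x_j\gamma) \in B$ can be pushed back in against $y$, giving $\tr_{\F/B}((v/\gamma)\,y)$. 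In words, $f_\gamma$ is ``self-adjoint'' for the trace pairing once one replaces the multiplier $x_j$ by $\tr_{\F/B}(x_j\gamma)/\gamma$.

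Given this, I would compute coordinate-wise. For each column index $\ell \in [rt]$ we have $(x^T A')_\ell = (x^T A)_\ell + x_j \cdot f_\gamma(A[i,\ell])$, so applying $\tr_{\F/B}$ and the identity above turns the correction term into $\tr_{\F/B}\inparen{ c \cdot A[i,\ell] }$ with $c = \tr_{\F/B}(x_j\gamma)/\gamma \in \F$ depending only on $x_j$ (not on $\ell$). Absorbing $c$ into the $i$-th coordinate, we get $\tr_{\F/B}(x^T A') = \tr_{\F/B}\inparen{(\sigma x)^T A}$, where $\sigma$ fixes every coordinate except the $i$-th, which it sends to $x_i + \tr_{\F/B}(x_j\gamma)/\gamma$. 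Because $i \neq j$, $\sigma$ is unipotent (triangular with ones on the diagonal) and $B$-linear, hence a bijection, with inverse $x_i \mapsto x_i - \tr_{\F/B}(x_j\gamma)/\gamma$. This gives $\vphi_{A'} = \vphi_A \circ \sigma$ and finishes the argument.

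The only delicate point is bookkeeping: keeping straight which quantities lie in $B$ and which in $\F$ so the trace is moved in and out legitimately, and in particular using $i \neq j$ to ensure $\sigma$ is genuinely triangular (rather than scaling the $i$-th coordinate by a factor $1 + (\text{something that could vanish})$, which would occur if one allowed $i = j$). Everything else is routine.
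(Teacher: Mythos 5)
Your proposal is correct and follows essentially the same route as the paper: the paper's proof consists precisely of exhibiting the substitution $(x')_i = x_i + \tr_{\F/B}(\gamma x_j)/\gamma$ (your $\sigma$) and asserting $\tr_{\F/B}(x^T A') = \tr_{\F/B}((x')^T A)$. You simply supply the trace-adjointness computation and the explicit remark that $i \neq j$ makes the substitution unipotent, details the paper leaves implicit.
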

\begin{proof}
	Given $x \in \F^r$, consider $x'$ given by
	$ (x')_i = x_i +  \tr( \gamma x_j )/\gamma. $
	Then $\tr_{\F/B}(x^T A') = \tr_{\F/B}((x')^T A).$
\end{proof}

Now consider the following algorithm:

\begin{itemize}
	\item[] Algorithm \textbf{LU}(A):
	\begin{itemize}
		\item[] $A^{(0)} \gets A$
		\item[] \textbf{for} $j = 1, \ldots, r-1$:
		\begin{itemize}
			\item[] \textbf{for} $s = j+1, \ldots, r$:
			\begin{itemize}
				\item[] $A^{(j)}[s,:] \gets A^{(j-1)}[s,:] + f_{\delta_j/(\alpha_s - \alpha_j)}( A^{(j-1)}[j,:] )$
			\end{itemize}
			\item[] \textbf{end for}
		\end{itemize}
		\item[] \textbf{end for}
		\item[] \textbf{return} $A^{(r)}$.
	\end{itemize}
\end{itemize}

Because of Observation~\ref{obs:rowop}, we see that $\vphi_{A^{(r)}}$ is invertible if and only if $\vphi_A$ is invertible.
Moreover, we claim that, if \eqref{eq:needthis} is met, then $A^{(r)}$ has the form \eqref{eq:form}.  To see this we proceed by induction.
Write $A = [A_1 |A_2 | \cdots | A_r ]$, where $A_j \in \F^{r \times t}$,  and similarly write
\[ A^{(j)} = [ A_1^{(j)} | A_2^{(j)} | \cdots | A_r^{(j)} ]. \]
Then the inductive hypothesis is that for all $i \leq j$, $A_i^{(j)}$ is equal to $A_i$ on the first $i$ rows and is equal to zero otherwise.  That is, the first $j$ blocks of $A^{(j)}$ have zeros in the form of \eqref{eq:form}, and all nonzero entries are the same as in $A$.
The base case is immediate for $j = 0$, with the notational assumption that any statement about $M[\ell, c]$ for $c \leq 0$ is vacuously true.

Now assuming that this holds for $j-1$, we establish it for $j$.  
First notice that, because of the inductive hypothesis, the first $j-1$ blocks do not change.
For block $j$, and a row $s > j$, we update
\[ A^{(j)}_j[s,:] \gets A_j^{(j-1)}[s,:] + f_{\delta_j/(\alpha_s - \alpha_j)} ( A^{(j-1)}_j[j,:] )
= A_j[s,:] + f_{\delta_j/(\alpha_s - \alpha_j)} (A_j[j,:]) \]
again using the inductive hypothesis.
By construction, for all $w \in [t]$,
\[ A_j[s,w] = \frac{ \delta_j}{\alpha_s - \alpha_j} \cdot \tr_{\F/B} ( \zeta_w \cdot( \alpha_s - \alpha_j) / \delta_j). \]
Then the update is
\begin{align*}
f_{\delta_j/(\alpha_s - \alpha_j)}( A_j[j,w] ) &= \frac{ \delta_j }{\alpha_s - \alpha_j } \cdot \tr_{\F/B}\inparen{ \frac{ \alpha_s - \alpha_j }{\delta_j } \cdot A_j[j,w] } \\
&= \frac{ \delta_j }{ \alpha_s - \alpha_j } \cdot \tr_{\F/B} ( \zeta_w \cdot( \alpha_s - \alpha_j) / \delta_j) \\
&= A_j[s,w],
\end{align*}
and so $A_j[s,w] - f_{\delta_j/(\alpha_s - \alpha_j)}(A_j[j,w]) = 0$ for all $s > j$.
Thus, this operation indeed zeros out all but the first $j$ rows of $A_j$.  

Next, consider a block $\ell > j$ and a row $s > j$.  We need to show that $A_\ell^{(j-1)}[s, w]$ does not change in the $j$'th iteration, namely that the update is zero.  Now we have, by induction and by definition, respectively, that
\[ A_\ell^{(j-1)}[j,w] = A_\ell[j,w] = \frac{ \delta_\ell}{\alpha_j - \alpha_\ell} \cdot \tr_{\F/B} ( \zeta_w \cdot( \alpha_j - \alpha_\ell) / \delta_\ell). \] 

A computation similar to the one above establishes that
\begin{align*}
f_{\delta_j/(\alpha_s - \alpha_j)}( A_\ell[j,w] ) 
&= \frac{ \delta_j }{ \alpha_s - \alpha_j } \cdot \tr_{\F/B} ( \zeta_w \cdot( \alpha_j - \alpha_\ell )/\delta_\ell ) \cdot \tr_{\F/B}\inparen{
	\frac{ \delta_\ell }{\delta_j} \cdot \frac{ \alpha_s - \alpha_j }{\alpha_j - \alpha_\ell } }
\\
&= 0,
\end{align*}
where in the final line we have used \eqref{eq:needthis}. 
This establishes the other part of the inductive hypothesis (that all other entries remain the same) and this completes the proof.

\section{Conclusion}
In this work, we have extended the framework of~\cite{gw16} to handle multiple failures, and instantiated this framework in two ways to give improved results for Reed-Solomon codes with multiple failures.  However, several open problems remain.  We highlight a few promising directions below.
\begin{enumerate}
\item We have no reason to believe that the bound in Theorem~\ref{thm:mainrs_large} is asymptotically tight. We leave it as an open question to either obtain an improved construction or a matching lower bound.  As a concrete question, we may ask if it is possible to repair full-length Reed-Solomon codes from $r = \omega(n^R)$ failures with nontrivial bandwidth.
\item Our work is restricted to the centralized model for repair of multiple nodes.  
On the other hand, the work of \cite{multfail} obtains results for Reed-Solomon codes for $r=2,3$ in other models where the communication between the nodes is taken into account when measuring the bandwidth; our framework does not apply there.  Could our techniques be adapted to apply to this model as well?
\item Finally, we have instantiated our framework for full-length RS codes, but it may be interesting in other parameter regimes as well.  Recently, Ye and Barg have given a construction of a Reed-Solomon code which achieves the cut-set bound for multiple failures~\cite{YB17}; however, the subpacketization is (necessarily) extremely large.  Can we instantiate our framework to obtain RS codes (or other scalar MDS codes) with repair bandwidth that improves over our scheme, but which still have small (sublinear in $n$) subpacketization?
\end{enumerate}

\bibliographystyle{plain}
\bibliography{refs}

\begin{thebibliography}{10}

\bibitem{CJM13}
Viveck~R Cadambe, Syed Jafar, Hamed Maleki, Kannan Ramchandran, Changho Suh,
  et~al.
\newblock Asymptotic interference alignment for optimal repair of {MDS} codes
  in distributed storage.
\newblock {\em Information Theory, IEEE Transactions on}, 59(5):2974--2987,
  2013.

\bibitem{DDC18}
Hoang Dau, Iwan Duursma, and Hien Chu.
\newblock On the {I/O} costs of some repair schemes for full-length
  reed-solomon codes.
\newblock {\em CoRR}, abs/1801.05101, 2018.

\bibitem{multfail}
Hoang Dau, Iwan Duursma, Han~Mao Kiah, and Olgica Milenkovic.
\newblock Repairing reed-solomon codes with multiple erasures.
\newblock {\em arXiv preprint arXiv:1612.01361}, 2016.

\bibitem{multfail2}
Hoang Dau, Iwan Duursma, Han~Mao Kiah, and Olgica Milenkovic.
\newblock Repairing reed-solomon codes with two erasures.
\newblock {\em arXiv preprint arXiv:1701.07118}, 2017.

\bibitem{DM17}
Hoang Dau and Olgica Milenkovic.
\newblock Optimal repair schemes for some families of full-length reed-solomon
  codes.
\newblock {\em arXiv preprint arXiv:1701.04120}, 2017.

\bibitem{DGWW10}
Alexandros~G Dimakis, P~Godfrey, Yunnan Wu, Martin~J Wainwright, and Kannan
  Ramchandran.
\newblock Network coding for distributed storage systems.
\newblock {\em Information Theory, IEEE Transactions on}, 56(9):4539--4551,
  2010.

\bibitem{survey}
Alexandros~G Dimakis, Kannan Ramchandran, Yunnan Wu, and Changho Suh.
\newblock A survey on network codes for distributed storage.
\newblock {\em Proceedings of the IEEE}, 99(3):476--489, 2011.

\bibitem{DD17}
Iwan Duursma and Hoang Dau.
\newblock Low bandwidth repair of the rs (10, 4) reed-solomon code.
\newblock In {\em Information Theory and Applications Workshop (ITA), 2017},
  pages 1--10. IEEE, 2017.

\bibitem{fraleigh}
John~B Fraleigh.
\newblock {\em A first course in abstract algebra}.
\newblock Pearson Education India, 2003.

\bibitem{gw16}
Venkatesan Guruswami and Mary Wootters.
\newblock {Repairing Reed-Solomon codes}.
\newblock {\em IEEE Transactions on Information Theory}, 2017.

\bibitem{KLS11}
Anne-Marie Kermarrec, Nicolas Le~Scouarnec, and Gilles Straub.
\newblock Repairing multiple failures with coordinated and adaptive
  regenerating codes.
\newblock In {\em Network Coding (NetCod), 2011 International Symposium on},
  pages 1--6. IEEE, 2011.

\bibitem{totalrecall}
Ranjita~Bhagwan Kiran, Kiran Tati, Yu-chung Cheng, Stefan Savage, and
  Geoffrey~M Voelker.
\newblock Total recall: System support for automated availability management.
\newblock In {\em NSDI}, 2004.

\bibitem{LL14}
Jun Li and Baochun Li.
\newblock Cooperative repair with minimum-storage regenerating codes for
  distributed storage.
\newblock In {\em INFOCOM, 2014 Proceedings IEEE}, pages 316--324. IEEE, 2014.

\bibitem{LLL15}
Runhui Li, Jian Lin, and Patrick~PC Lee.
\newblock Enabling concurrent failure recovery for regenerating-coding-based
  storage systems: From theory to practice.
\newblock {\em IEEE Transactions on Computers}, 64(7):1898--1911, 2015.

\bibitem{macwilliams-sloane}
Florence~Jessie MacWilliams and Neil James~Alexander Sloane.
\newblock {\em The theory of error correcting codes}.
\newblock Elsevier, 1977.

\bibitem{facebook}
KV~Rashmi, Nihar~B Shah, Dikang Gu, Hairong Kuang, Dhruba Borthakur, and Kannan
  Ramchandran.
\newblock A solution to the network challenges of data recovery in
  erasure-coded distributed storage systems: A study on the {F}acebook
  warehouse cluster.
\newblock In {\em HotStorage}, 2013.

\bibitem{RKV16}
Ankit~Singh Rawat, O~Ozan Koyluoglu, and Sriram Vishwanath.
\newblock Centralized repair of multiple node failures with applications to
  communication efficient secret sharing.
\newblock {\em arXiv preprint arXiv:1603.04822}, 2016.

\bibitem{scalarmds}
Karthikeyan Shanmugam, Dimitris~S Papailiopoulos, Alexandros~G Dimakis, and
  Giuseppe Caire.
\newblock A repair framework for scalar {MDS} codes.
\newblock {\em Selected Areas in Communications, IEEE Journal on},
  32(5):998--1007, 2014.

\bibitem{SH13}
Kenneth~W Shum and Yuchong Hu.
\newblock Cooperative regenerating codes.
\newblock {\em IEEE Transactions on Information Theory}, 59(11):7229--7258,
  2013.

\bibitem{YTB17}
Itzhak Tamo, Min Ye, and Alexander Barg.
\newblock Optimal repair of reed-solomon codes: Achieving the cut-set bound.
\newblock {\em arXiv preprint arXiv:1706.00112}, 2017.

\bibitem{WTB17}
Zhiying Wang, Itzhak Tamo, and Jehoshua Bruck.
\newblock Optimal rebuilding of multiple erasures in {MDS} codes.
\newblock {\em IEEE Transactions on Information Theory}, 63(2):1084--1101,
  2017.

\bibitem{YB16}
Min Ye and Alexander Barg.
\newblock {Explicit Constructions of {MDS} array codes and {RS} codes with
  optimal repair bandwidth}.
\newblock In {\em Information Theory Proceedings (ISIT), 2016 IEEE
  International Symposium on}. IEEE, 2016.

\bibitem{YB17}
Min Ye and Alexander Barg.
\newblock Repairing reed-solomon codes: Universally achieving the cut-set bound
  for any number of erasures.
\newblock {\em CoRR}, abs/1710.07216, 2017.

\bibitem{YB18}
Min Ye and Alexander Barg.
\newblock Optimal {MDS} codes for cooperative repair.
\newblock {\em CoRR}, abs/1801.09665, 2018.

\bibitem{ZW16}
Marwen Zorgui and Zhiying Wang.
\newblock Centralized multi-node repair in distributed storage.
\newblock In {\em Communication, Control, and Computing (Allerton), 2016 54th
  Annual Allerton Conference on}, pages 617--624. IEEE, 2016.

\bibitem{ZW17b}
Marwen Zorgui and Zhiying Wang.
\newblock Centralized multi-node repair for minimum storage regenerating codes.
\newblock In {\em Information Theory Proceedings (ISIT), 2017 IEEE
  International Symposium on}. IEEE, 2017.

\bibitem{ZW17}
Marwen Zorgui and Zhiying Wang.
\newblock Centralized multi-node repair regenerating codes.
\newblock {\em arXiv preprint arXiv:1706.05431}, 2017.

\end{thebibliography}

\end{document}